\documentclass[a4paper,UKenglish,cleveref, autoref, thm-restate,svgnames]{lipics-v2021}

\pdfoutput=1 
\hideLIPIcs  


\bibliographystyle{plainurl}

\usepackage{parskip}
\usepackage{charter}

\title{Chess is Hard even for a Single Player} 

\titlerunning{Chess is Hard even for a Single Player} 

\author{N.R. Aravind}{Department of Computer Science and Engineering, Indian Institute of Technology, Hydrebad, India \and \url{https://www.iith.ac.in/~aravind/} }{aravind@cse.iith.ac.in}{}{}

\author{Neeldhara Misra}{Department of Computer Science and Engineering, Indian Institute of Technology, Gandhinagar, India \and \url{https://www.neeldhara.com} }{neeldhara.m@iitgn.ac.in}{}{}

\author{Harshil Mittal}{Department of Computer Science and Engineering, Indian Institute of Technology, Gandhinagar, India}{mittal\_harshil@iitgn.ac.in}{}{}

\authorrunning{Aravind, Misra, and Mittal} 

\Copyright{N.R. Aravind, Neeldhara Misra, and Harshil Mittal} 
\begin{CCSXML}
<ccs2012>
   <concept>
       <concept_id>10002950.10003624</concept_id>
       <concept_desc>Mathematics of computing~Discrete mathematics</concept_desc>
       <concept_significance>500</concept_significance>
       </concept>
   <concept>
       <concept_id>10003752.10003809</concept_id>
       <concept_desc>Theory of computation~Design and analysis of algorithms</concept_desc>
       <concept_significance>500</concept_significance>
       </concept>
 </ccs2012>
\end{CCSXML}

\ccsdesc[500]{Mathematics of computing~Discrete mathematics}
\ccsdesc[500]{Theory of computation~Design and analysis of algorithms}

\keywords{chess, strategy, board games, NP-complete} 

\category{} 

\relatedversiondetails[]{Conference Version (to appear)}{FUN 2022} 



\funding{The second author acknowledges support from IIT Gandhinagar and the SERB-MATRICS grant MTR/2017/001033. The second and third authors also acknowledge support from the SERB-ECR grant ECR/2018/002967.}

\acknowledgements{The authors would like to thank the anonymous reviewers of FUN 2022 for their useful feedback.}

\nolinenumbers 

\EventEditors{Pierre Fraigniaud and Yushi Uno}
\EventNoEds{2}
\EventLongTitle{11th International Conference on Fun with Algorithms (FUN 2022)}
\EventShortTitle{FUN 2022}
\EventAcronym{FUN}
\EventYear{2022}
\EventDate{May 30--June 3, 2022}
\EventLocation{Island of Favignana, Sicily, Italy}
\EventLogo{}
\SeriesVolume{226}
\ArticleNo{4}

\usepackage{float}
\usepackage[backgroundcolor=SkyBlue,linecolor=SkyBlue,obeyFinal]{todonotes}

\makeatletter
\providecommand\@dotsep{5}
\def\listtodoname{}
\def\listoftodos{\@starttoc{tdo}\listtodoname}
\makeatother

\newcounter{nmcomment}

\usepackage{comment}
\excludecomment{shortver}
\includecomment{longver}

\usepackage{tikz}
\usetikzlibrary{positioning}
\usetikzlibrary{arrows}
\usetikzlibrary{snakes}
\usetikzlibrary{decorations.pathmorphing}
\usetikzlibrary{decorations.markings}
\usetikzlibrary{decorations.pathreplacing}
\usetikzlibrary{shapes.geometric}

\usepackage{latexsym,amssymb,amsmath,mathtools,eulervm,xspace}


\usepackage{skak}
\usetikzlibrary{matrix}
\usepackage{tcolorbox}
\usepackage{xcolor}

\newcommand{\NPC}{\ensuremath{\mathsf{NP}}-complete\xspace}

\usepackage{caption}
\usepackage{subcaption}

\begin{document}

\maketitle

\begin{abstract}
We introduce a generalization of ``Solo Chess'', a single-player variant of the game that can be played on chess.com. The standard version of the game is played on a regular 8 $\times$ 8 chessboard by a single player, with only white pieces, using the following rules: every move must capture a piece, no piece may capture more than 2 times, and if there is a King on the board, it must be the final piece. The goal is to clear the board, i.e, make a sequence of captures after which only one piece is left. 

We generalize this game to unbounded boards with $n$ pieces, each of which have a given number of captures that they are permitted to make. We show that \textsc{Generalized Solo Chess} is \textsf{NP}-complete, even when it is played by only rooks that have at most two captures remaining. It also turns out to be \textsf{NP}-complete even when every piece is a queen with exactly two captures remaining in the initial configuration. In contrast, we show that solvable instances of \textsc{Generalized Solo Chess} can be completely characterized when the game is: a) played by rooks on a one-dimensional board, and b) played by pawns with two captures left on a 2D board.

Inspired by \textsc{Generalized Solo Chess}, we also introduce the \textsc{Graph Capture Game}, which involves clearing a graph of tokens via captures along edges. This game subsumes \textsc{Generalized Solo Chess} played by knights. We show that the \textsc{Graph Capture Game} is \textsf{NP}-complete for undirected graphs and DAGs.
\end{abstract}

\newpage 

\section{Introduction}
\label{sec:intro}

Chess, the perfect-information two-player board game, needs to introduction. With origins dating back to as early as the 7th century, organized chess arose in the 19th century to become one of the world's most popular games in current times. At the time of this writing, the recent pandemic years witnessed a phenomenal growth of the already popular game, among spectators and amateur players alike. One of the most active computer chess sites, chess.com, is reported to have more than 75 million members, and about four million people sign in everyday.

As the reader likely knows already, chess is played on a square chessboard with 64 squares arranged in an eight-by-eight grid. At the start, each player (one controlling the white pieces, the other controlling the black pieces) controls sixteen pieces: one king, one queen, two rooks, two bishops, two knights, and eight pawns. The object of the game is to checkmate the opponent's king, whereby the king is under immediate attack (in ``check'') and there is no way for it to escape. There are also several ways a game can end in a draw. The movements of the individual pieces are subject to different constraints. While several chess engines exist for this classical version of the game, it is also known that the generalized version of chess, played on a $n \times n$ board by two players with $2n$ pieces is complete for the class \textsf{EXPTIME}~\cite{FRAENKEL1981199}. Cooperative versions of chess are also known to be hard~\cite{brunner2020complexity}.

\begin{figure}
\centering
\includegraphics[scale=0.15]{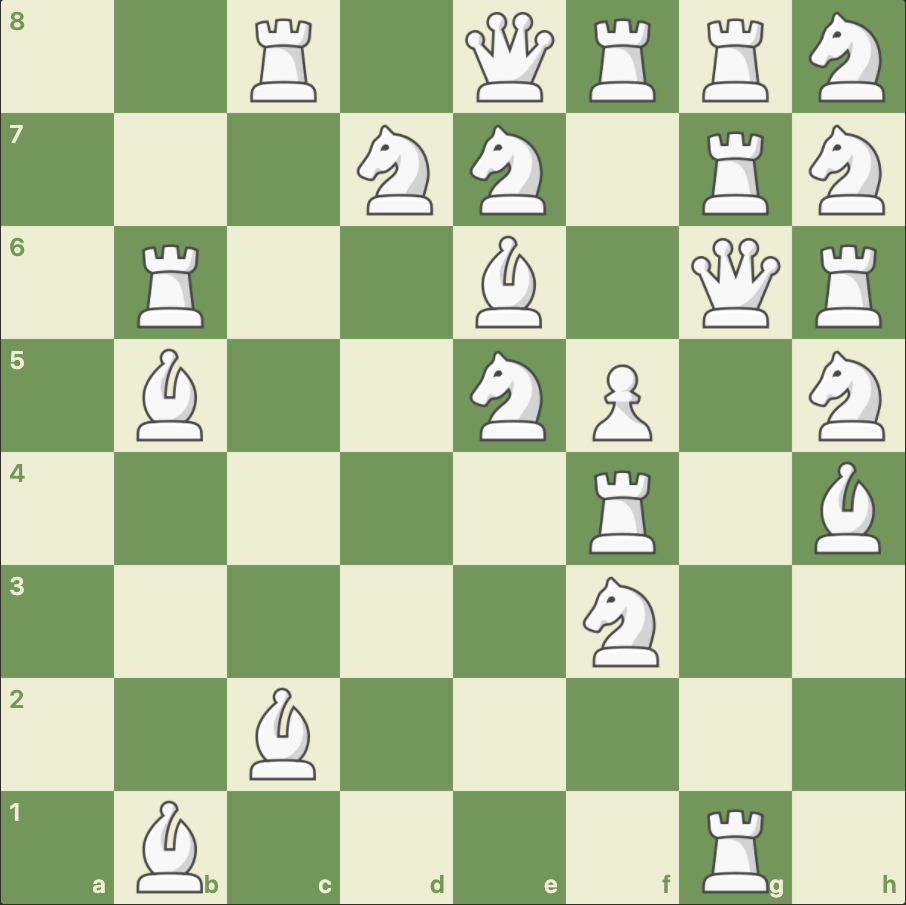}
\caption{An example of a Solo Chess configuration.}
\end{figure}

The game of \textsc{Solo Chess} is an arguably natural single-player variant of the game. We consider here a version that can be found among the chess puzzles on chess.com. The game is played on a regular 8 $\times$ 8 chessboard by a single player, with only white pieces, using the following rules: every move must capture a piece, no piece may capture more than 2 times, and if there is a King on the board, it must be the final piece. Given a board with, say, $n$ pieces in some configuration, the goal is to play a sequence of captures that ``clear'' the board. To the best of our knowledge, chess.com presents its players only with solvable configurations, even if this may not always be obvious\footnote{c.f. ``Crazy Mode''.}. The solutions, however, need not be unique. 

While the focus of our contribution here is on computational aspects of determining if a solo chess instance is solvable, we refer the reader to~\cite{ww} for a comprehensive and entertaining introduction to combinatorial game theory at large.

\paragraph*{Our Contributions}

We introduce a natural generalization of \textsc{Solo Chess} that we call \textsc{Generalized Solo Chess}$(P,d)$, where $P \subseteq \{\symqueen,\symrook,\symbishop,\sympawn,\symknight\}$ is a collection of piece types and $d \in \mathbb{N}$. This version of the game is played on the infinite integer lattice where we are given, initially, the positions of $n$ pieces, each of which is one of the types given in $P$. We are also given, for each piece, the number of captures it can make --- and further, this bound is at most $d$. The goal is to figure out if there is a sequence of $(n-1)$ valid captures such that: a) no piece captures more than the number of times it is allowed to capture; and b) the sequence of captures, when played out, ``clears the board'', i.e, only one piece remains at the end. 

We focus on settings where $|P| = 1$, i.e, when all pieces are of the same type. When the game is played only with rooks, we show that the problem is \NPC{} even when $d = 2$, but is tractable when the game is restricted to a one-dimensional board for arbitrary $d$.

\begin{restatable}[A characterization for rooks on 1D boards]{theorem}{rookseasy}
\label{thm:rooks1d}%
\textsc{Generalized Solo Chess ($\symrook,d$)} with $n$ rooks can be decided in $O(n)$ time for any $d \in \mathbb{N}$. 
\end{restatable}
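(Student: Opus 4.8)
The plan is to prove the characterization that an instance is solvable if and only if $\sum_{i=1}^n c_i \ge n-1$, where $c_i$ denotes the number of captures available to the $i$-th rook in left-to-right order; since this quantity is computable with a single scan, the $O(n)$ bound is then immediate (note that the per-rook cap $d$ plays no role in the criterion beyond bounding the individual $c_i$, so the same criterion works for every $d$). The first step is to reduce the geometric game to a purely combinatorial one on a sequence. On a one-dimensional board a rook can capture only the nearest occupied square in either direction, since it cannot jump over an intervening piece; thus captures occur exactly between pieces that are consecutive in the left-to-right order, and after a capture the surviving rook merely takes over one of the two squares, leaving the relative order of the remaining pieces unchanged. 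Hence the actual coordinates are irrelevant: the game amounts to repeatedly choosing two adjacent entries of $(c_1,\dots,c_n)$ and merging them, where one entry (the capturer, which must have budget at least $1$) survives with its budget decreased by one while the other is deleted, and the board is cleared precisely when the sequence is reduced to a single entry.

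Necessity of the condition is the easy direction. Clearing the board requires exactly $n-1$ captures, and each capture can be charged to the rook that performs it; since a rook that starts with budget $c_i$ can perform at most $c_i$ captures over its lifetime, summing over all rooks gives $\sum_i c_i \ge n-1$.

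For sufficiency I would argue greedily, maintaining the invariant that the current sequence, of length $m$ say, still satisfies $(\text{sum of current budgets}) \ge m-1$. It suffices to exhibit, whenever $m \ge 2$, one legal merge after which the invariant still holds; iterating then clears the board. I distinguish two cases. If some entry has budget $0$, I claim some zero-budget entry is adjacent to a positive-budget entry: a maximal block of consecutive zeros is bordered, on any side that is not an end of the sequence, by a positive entry, and since the invariant forces the total budget to be positive the whole sequence cannot be a single block of zeros. Letting that positive neighbour capture the adjacent zero decreases both the total budget and $m$ by exactly one, preserving the invariant. If instead every entry is positive, then the total budget is at least $m$, so there is slack; here I let a neighbour capture a globally minimum-budget entry, whose budget $\ell$ satisfies $\ell \le (\text{sum})/m \le \text{sum} - (m-1)$ (the last inequality using $\text{sum} \ge m$), so that the new total $\text{sum} - \ell - 1$ is at least $m-2$, again preserving the invariant.

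The routine parts are the reduction and the necessity direction; the step that needs the most care is the existence of a good greedy move in the sufficiency proof, and in particular the observation that when zero-budget rooks are present one of them can always be eliminated by a positive neighbour, so that budget is never ``stranded'' away from the pieces it must remove. Once that is in place, the algorithm is simply to compute $\sum_i c_i$ and report the instance as solvable exactly when it is at least $n-1$.
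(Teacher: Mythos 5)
Your proof is correct, but it takes a genuinely different route from the paper's. The paper first fixes the square $\ell$ where the final rook stands, observes that no capture can cross $\ell$ (that square is never vacated), and then characterizes each side independently via an inductive lemma: a segment can be cleared toward its endpoint iff $\sum (c_i-1)$, taken over the positive-budget rooks other than the target, is at least the number of zero-budget rooks. You instead abstract the game to adjacent merges on the budget sequence and prove the single global criterion $\sum_i c_i \ge n-1$, with a charging argument for necessity and a greedy move preserving the invariant $\mathrm{sum} \ge m-1$ for sufficiency; I checked both cases of your greedy step and they go through (in particular the observation that a maximal block of zeros always has a positive neighbour, and the inequality $\ell \le \mathrm{sum}/m \le \mathrm{sum}-(m-1)$ when all entries are positive). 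The two characterizations look different but are in fact equivalent: setting $f_i = c_i - 1$ and $F_j = \sum_{i \le j} f_i$, the paper asks for a split point $\ell$ with $F_{\ell-1}\ge 0$ and $F_n - F_\ell \ge 0$; such an $\ell$ forces $F_n \ge -1$ since $f_\ell \ge -1$, and conversely if $F_n \ge -1$ the smallest $\ell$ with $F_\ell \le F_n$ works, so the split criterion holds iff $\sum_i c_i \ge n-1$. Your version is cleaner to state and to prove, avoiding the case analysis over the final square and the two nested inductions; the paper's finer, per-side lemma has the advantage of pinpointing where the final rook may stand, which the paper reuses in the forward direction of the 2D hardness reduction (\Cref{thm:rooks}) when clearing the top row.
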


\begin{restatable}[Intractability for rooks]{theorem}{rooks}
\label{thm:rooks}%
\textsc{Generalized Solo Chess ($\symrook,2$)} is \NPC{}.
\end{restatable}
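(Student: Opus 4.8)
The plan is to show that the rook version with $d=2$ lies in \NP{} and is \NP-hard, the latter by a polynomial-time reduction from \textsc{3-Sat}. Membership in \NP{} is immediate: a certificate is the sequence of at most $n-1$ captures, each recorded as an ordered pair (capturing rook, captured rook). Given such a sequence I can simulate it on the board in polynomial time, checking at each step that the move is a legal rook capture (the two rooks share a row or a column and no third piece lies strictly between them \emph{at that moment}), that no rook ever exceeds its budget of two, and that exactly one piece survives. The crux is the hardness direction, and I would begin by reformulating a successful clearing combinatorially: every piece except the final survivor is captured exactly once, so the ``who captured whom'' relation is acyclic (a captured piece is gone and can no longer act) and organizes the pieces into a single tree rooted at the survivor, in which each node has at most two children because each rook captures at most twice. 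A clearing is thus a \emph{schedulable binary capture tree}, and this structural picture is what guides the gadget design.

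The reduction itself I would build from three ingredients: variable gadgets, clause gadgets, and wires connecting them. A key enabling observation is that the board is the whole integer lattice, so I can place each gadget in its own bounding region and arrange that two regions share a row or column \emph{only} along deliberately routed wires, keeping all other foreign lines permanently obstructed by blocker rooks; this confines each gadget to its own rows and columns except where interaction is intended. A variable gadget should offer a genuine binary choice, realized by which of two possible first captures is made by a central budget-two ``hub'' rook; committing this choice fixes a polarity and releases a capturing rook down exactly one of two literal wires. A clause gadget should be a small collinear cluster that can be cleared if and only if at least one of its incident literal wires delivers a capturing rook into it; since clearability \emph{along a single line} is exactly the one-dimensional case, I can invoke Theorem~\ref{thm:rooks1d} to certify the local behaviour of each such cluster. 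Throughout, the budget of two is the enforcement mechanism: it prevents a rook that has discharged its role in one gadget from being reused to rescue another, which is what makes the logical constraints binding.

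For correctness, the forward direction is the routine part: from a satisfying assignment I would read off the polarity choices in every variable gadget and schedule the induced captures region by region, clearing each satisfied clause through a literal that is set true and mopping up leftover blockers and wire rooks in a fixed order. The reverse direction is where I expect the real difficulty to lie, and it splits into two intertwined issues that are special to rooks. First, \emph{confinement}: because a rook ranges over an entire row or column, I must prove that no ``long-range'' capture across gadget boundaries ever becomes available and lets the solver clear the board while cheating on the encoded logic; this is exactly what the blocker rooks and the coordinate spacing are meant to preclude, but it has to be argued rather than assumed. Second, \emph{scheduling}: captures are temporally ordered and removing pieces dynamically unblocks lines, so a line that is blocked now may open later, and I must show that any clearing order, however exotic, still forces a consistent assignment and that an unsatisfied clause provably leaves an unclearable remnant (a sub-configuration whose collinear structure violates the Theorem~\ref{thm:rooks1d} criterion). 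Ruling out all such cheating strategies in the presence of this dynamic, global reachability is the main obstacle, and getting the gadget geometry tight enough to make this invariant clean is the part I would expect to consume most of the effort.
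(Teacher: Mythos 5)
Your NP-membership argument and the ``capture forest'' picture are fine, but the hardness direction is a plan rather than a proof: no variable gadget, clause gadget, or wire is ever actually constructed, and you yourself flag the two places (confinement and scheduling) where the argument would have to do real work. The difficulty you correctly identify is in fact fatal to the proposal as written: every ``blocker rook'' must itself be captured before the board is cleared, and the moment it is captured the line it was obstructing opens up, so blockers cannot give \emph{permanent} confinement. Resolving this tension is the entire content of the reduction, and deferring it means the \NP-hardness claim is not established. A secondary concern is that with budget $d=2$ a rook that ``commits a polarity'' in a variable gadget has at most one capture left, which makes it hard to both release a signal down a literal wire and participate in cleanup; your sketch does not say how fan-out to multiple clause occurrences of the same literal would be achieved under this budget.

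For comparison, the paper sidesteps gadget routing entirely by reducing from \textsc{Red-Blue Dominating Set} on a $(2m+1)\times(n+m+k+1)$ grid: each terminal (blue) vertex gets its own row containing $1$-rooks in the columns of its neighbors plus one collector $2$-rook; the terminal rooks in a row must all funnel into a single column (the chosen dominator), the collector delivers the resulting piece up that column to the top row, and the top row is then finished using $k$ cleaner $2$-rooks together with the 1D characterization (the analogue of your appeal to Theorem~\ref{thm:rooks1d}). Confinement is obtained for free because the budget forces it: any cell that ever holds a $0$-rook must later be visited by a $2$-rook, and the paper's case analysis shows the only $2$-rooks available are the collectors and cleaners, which pins down where captures can land and yields a dominating set of size at most $k$ in the reverse direction. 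If you want to salvage a SAT-based route you would need to make your gadgets similarly ``budget-enforced'' rather than ``geometry-enforced,'' since geometry alone cannot be preserved throughout a clearing.
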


When all pieces are queens, note that \textsc{Generalized Solo Chess} played on a one-dimensional board is equivalent to the game played by rooks. On the other hand, on a two-dimensional board, the game turns out to be hard even when all pieces can capture twice in the initial configuration, which is in the spirit of the regular game and is a strengthening of the hardness that we have for rooks. 

\begin{restatable}[Intractability for queens]{theorem}{queens}
\label{thm:queens}%
\textsc{Generalized Solo Chess ($\symqueen,2$)} is \NPC{} even when all queens are allowed to capture at most twice.
\end{restatable}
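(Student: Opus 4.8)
\emph{Membership in \NP{}} is immediate: a solution is simply a sequence of $n-1$ captures, and we can replay it in polynomial time, checking at each step that the moving queen has not exhausted its two captures and that it attacks its target along a clear rank, file, or diagonal (there is no king, so the ``king last'' rule is vacuous). The interesting direction is hardness, and since \Cref{thm:rooks} already shows that the rook version with $d=2$ is \NPH{}, the plan is to lift that construction to queens. Two things must be arranged: (i) the extra diagonal captures available to a queen must be rendered harmless, and (ii) every piece must begin with exactly two available captures, which is what yields the stronger statement.

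First I would \emph{neutralise the diagonals}. Observe that every square a queen can ever occupy during play is the starting square of some piece, since a capturing queen always moves onto its victim's square and, by induction, victims occupy original squares; hence if no two pieces are initially diagonally aligned, then no diagonal capture is ever available and a queen behaves exactly like a rook throughout. To guarantee this I apply the coordinate stretch $(x,y)\mapsto(x,My)$ to the rook construction, where $M$ exceeds the horizontal extent of the board. This map preserves every same-rank and same-file relation, together with which pieces lie between which (so all rook captures are unchanged), while any diagonal alignment $|x_1-x_2|=|y_1-y_2|\neq 0$ is destroyed and, because $M$ is larger than any $|x_1-x_2|$, no new alignment of any kind is created. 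On the stretched board the queen instance is therefore equivalent to the rook instance, giving hardness via \Cref{thm:rooks} --- but only with the mixed capture budgets inherited from the rook construction.

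The main work is to \emph{homogenise the capture budgets} to two. The rook construction uses pieces whose intended budgets are $0$ or $1$, and simply raising every cap to $2$ is an upper-bound relaxation: it never destroys the intended clearing sequence, so a solvable instance stays solvable. The real danger is in the other direction --- the surplus captures might create \emph{spurious} clearing sequences in an instance that should be unsolvable. I would therefore redesign each variable and clause gadget so that its correctness is robust to this relaxation, using the now purely orthogonal geometry to play the role formerly played by the low budgets: concretely, I would place the pieces so that after a queen makes its intended capture, any further capture it could make lands it on a square from which the remaining board can no longer be cleared. In other words, the surplus capture is engineered to be a trap rather than a shortcut.

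It then remains to prove the equivalence between the queen instance and the rook instance it is built from. A clearing sequence of the rook instance immediately gives one for the queen instance in which every queen captures according to its intended role; conversely, any clearing sequence of the queen instance is forced by the stretched geometry to avoid diagonal shortcuts and by the trap structure to avoid surplus captures, so it projects back onto a valid clearing sequence of the rook instance, whose solvability is equivalent to that of the source problem by \Cref{thm:rooks}. I expect the budget homogenisation to be the real obstacle: because an unused extra capture can in principle be invoked anywhere, the crux is a careful per-gadget case analysis showing that every capture beyond the intended ones strands some piece, and this is exactly where the queen statement is genuinely stronger than the rook statement.
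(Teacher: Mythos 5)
Your \NP{} membership argument and your diagonal-neutralisation step are fine and match the paper: the paper likewise observes that pieces only ever occupy initial squares and destroys all diagonal alignments by inserting $O(n^2)$ empty columns between consecutive columns of the rook board (your one-coordinate stretch $(x,y)\mapsto(x,My)$ achieves the same thing). The gap is in the step you yourself identify as the crux, the homogenisation of capture budgets, where you do not actually supply the construction. Saying that you would ``redesign each variable and clause gadget'' so that every surplus capture is a trap is not a proof: the rook reduction has no variable or clause gadgets (it is from \textsc{Red-Blue Dominating Set}, with non-terminal, terminal, collector and cleaner rooks whose budgets are $1$ or $2$, not $0$ or $1$), and the required per-gadget analysis --- that from every reachable configuration, every unintended extra capture strands some piece --- is exactly the hard content, which you explicitly defer. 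As written, the hardness direction for the all-budgets-equal-to-two version is not established.

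The paper's actual device is different and much more local, and it is the idea your proposal is missing: it does not touch the rook gadgets at all. Instead, every queen that is supposed to have budget $1$ is replaced by a \emph{pair} of budget-$2$ queens placed diagonally, a ``supporting'' queen and its ``partner'' (the partner sitting on the original square). The supporting queen is slid far enough along its diagonal that it attacks \emph{only} its partner. Since all but at most one supporting queen must leave the board, and a supporting queen can interact only with its partner, it is forced to capture the partner, landing on the original square with exactly one capture remaining --- i.e., it simulates a $1$-queen. This uses the queens' diagonal attack constructively (which is precisely why the analogous ``all rooks have budget $2$'' question is left open in the paper), rather than trying to argue that every extra capture is globally harmful. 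If you want to complete your proof, you should adopt a gadget of this kind; the global ``trap'' analysis you propose would require re-proving the soundness of the entire reverse direction of \Cref{thm:rooks} under relaxed budgets, and there is no evidence in your sketch that this can be made to work.
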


When all pieces are bishops, no piece has a valid move if the game is restricted to a one-dimensional board. On the other hand, it is easy to check that \textsc{Generalized Solo Chess} played on a two-dimensional board with bishops only can be reduced to \textsc{Generalized Solo Chess} played on a two-dimensional board with rooks only, by simply ``rotating'' the board 45 degrees. Therefore, we do not discuss the case of bishops explicitly. 

We now turn to the case when the game is played only with pawns: as with bishops, the game is not interesting on a one-dimensional board. However, when played on a two-dimensional board with pawns that have two captures left, it turns out that we can efficiently characterize the solvable instances. 

\begin{restatable}[A characterization for pawns]{theorem}{pawns}
\label{thm-pawn-win}
\textsc{Generalized Solo Chess ($\sympawn,2$)} with $n$ white pawns, each of which can capture at most twice, can be decided in $O(n)$ time.
\end{restatable}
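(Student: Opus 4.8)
The plan is to reduce the game to a token-merging problem on a planar DAG and then characterise solvability of that problem combinatorially. First I would note that a white pawn's only legal moves are the diagonal captures $(x,y)\mapsto(x\pm1,y+1)$, which preserve the colour of the square; hence the white-square pawns and the black-square pawns form two completely independent sub-instances, and since each non-empty colour class can be reduced to at least one surviving pawn (a capture needs two pieces), a solvable instance must have \emph{all} pawns on a single colour. Fixing that colour and applying the $45^\circ$ coordinate change $a=(x+y)/2$, $b=(y-x)/2$ (integral within one colour class), every capture becomes a unit step that increases exactly one of $a,b$; thus the level $h=a+b$ strictly increases with each capture, and each pawn, having at most two captures, climbs at most two levels. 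The decisive observation is that a capturing pawn always lands on an already-occupied cell, so the set of occupied cells only shrinks and, once vacated, a cell is never reoccupied. Consequently every capture takes place along an arc of the fixed grid-DAG $D$ whose vertices are the initial cells and whose arcs join each cell to its two forward neighbours $(a+1,b)$ and $(a,b+1)$.

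Next I would extract the structure of a winning play on $D$. Since each cell is vacated at most once, the set of arcs actually used as captures has out-degree at most one at every vertex; as there are exactly $n-1$ captures (distinct arcs) on $n$ vertices and arcs strictly raise the level, these arcs form a spanning in-arborescence $T$ of $D$ rooted at the unique sink $t$, which is also the unique top-level cell and the home of the surviving pawn. I would then read off the budget constraint: the trajectory of each capturing pawn is a directed path in $T$ that starts at a leaf (a cell no pawn is ever captured on) and whose length is that pawn's number of captures, so a valid play decomposes the arcs of $T$ into leaf-rooted directed paths of length at most two. A short argument shows that, for a fixed $T$, such a decomposition exists \emph{iff every non-root internal vertex of $T$ has a leaf child}: the out-arc of an internal vertex can only be covered by a length-two path descending to a leaf child, while the remaining leaf-arcs are covered by length-one paths. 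For the converse I would supply a scheduling lemma: given such a $T$ and decomposition, executing captures in bottom-up order (letting a path that continues through a shared vertex act last there) realises the play without exceeding any budget. This yields the characterisation: the instance is solvable iff all pawns share a colour and $D$ has a unique sink admitting a spanning in-arborescence in which every non-root internal vertex has a leaf child.

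Finally I would turn this into an $O(n)$ test. Because every arc of $D$ raises the level by exactly one, the arborescence is \emph{levelled}: a vertex's parent lies one level above it, so choosing $T$ amounts to choosing, between consecutive anti-diagonals, a parent map along arcs, and the leaf-child condition becomes the requirement that every cell receiving children also receives at least one leaf. I would process the anti-diagonals from the bottom, maintaining which cells of the current level are leaves versus internal, and assign parents greedily from left to right; the point is that on an anti-diagonal the forward neighbours of consecutive cells are themselves consecutive, so the bipartite graph between adjacent levels is an interval structure on which a left-to-right sweep runs in linear total time.

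The main obstacle is this last step: proving that the level-by-level greedy never makes a locally sound choice that blocks a globally valid arborescence. I expect to handle it with an exchange argument exploiting the interval (consecutiveness) structure of the forward-neighbour relation, showing that it is always safe to push the set of internal vertices as far as possible so as to minimise the leaf-child demand imposed on the next level. Establishing the spanning-arborescence characterisation (especially sufficiency via the scheduling lemma) together with this greedy-correctness argument is where essentially all the work lies; the reductions in the first two paragraphs become routine once the occupied-cells-only-shrink observation is in place.
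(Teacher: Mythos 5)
Your reduction to a token-merging problem on the leveled grid DAG is sound, and the characterization you arrive at --- solvable iff all pawns share a colour and the DAG admits a spanning in-arborescence, rooted at the unique topmost cell, in which every non-root internal vertex has a leaf child --- is correct: the occupied-cells-only-shrink observation, the fact that each capturing pawn's trajectory is a leaf-rooted path of length at most two, and the bottom-up scheduling lemma all check out. Note, however, that this is essentially the same characterization the paper proves for the general \textsc{Graph Capture} game (Lemma~\ref{spanningTreeLemma}), and there it is the engine of an \NPH{}ness proof: deciding the existence of such a spanning tree is intractable in general. For pawns the paper therefore takes a different route. It defines a configuration to be \emph{super-solvable} when the final capturer still has a move to spare, shows by induction (peeling the root and one child off the top) that the super-solvable configurations are exactly the ``skewed binary trees'' --- every non-empty row below the root contains \emph{exactly two} cells sharing a common parent --- and then disposes of general solvability by a three-way case analysis on the root's children and grandchildren. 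The $O(n)$ test is then a purely syntactic check of four local conditions, with no search at all.

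The genuine gap in your proposal is exactly the step you flag as the main obstacle: the correctness of the level-by-level greedy. All of the problem's difficulty has been deferred to showing that a left-to-right sweep decides, in linear time, whether the required arborescence exists; neither the precise greedy rule nor the exchange argument is supplied, and ``push the set of internal vertices as far as possible'' is not yet a well-defined choice (different minimal internal sets on a level are incomparable, and which one is safe depends on levels not yet seen). What actually makes the problem easy is a rigidity statement your sketch never establishes: in any YES instance, each row below the root's immediate vicinity contains at most two cells, all children of a single cell of the row above, at most one of which is itself internal --- otherwise some internal parent is starved of a leaf child. Once that is proved, there is essentially no choice left for any algorithm to make and the interval structure is beside the point; without it, you are asserting a linear-time algorithm for a special case of a problem that your own characterization shows is \NPH{} in general, on the strength of an unproven heuristic. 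Closing the gap requires either proving this rigidity lemma (at which point you have rederived the paper's skewed-binary-tree characterization) or writing out the exchange argument in full, including why a locally safe choice cannot block a valid completion two or more levels higher.
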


When the game is played by knights only, again the game is trivial on a one-dimensional board. On a two-dimensional board, consider the following graph that is naturally associated with any configuration of knights: we introduce a vertex for every occupied position, and a pair of vertices are adjacent if and only if the corresponding positions are mutually attacking. Note that for all other pieces considered so far, an attacking pair of positions need not imply that a capture is feasible, since there may be blocking pieces in some intermediate locations. Knights are unique in that the obstacles are immaterial. This motivates the \textsc{Graph Capture} game: here we are given a graph with tokens on vertices, and the goal is to clear the tokens by a sequence of captures. The tokens can capture along edges and the number of captures that the tokens can make is given as a part of the input. 

Note that \textsc{Generalized Solo Chess}($\symknight,d$) is a special case of of \textsc{Graph Capture}$(d)$. We show that solvable instances of the latter on undirected graphs are characterized by the presence of a rooted spanning tree with the property that every internal node has at least one leaf neighbor. However, we also show that finding such spanning trees is intractable. We also show that \textsc{Graph Capture}$(d)$ is \NPC{} on DAGs. 

\begin{restatable}[Intractability of the graph capture game]{theorem}{graphcapture}
\label{thm:graphcapture} 
\textsc{Graph Capture}(2) is NP-complete on undirected graphs and DAGs even when every token can capture at most twice.
\end{restatable}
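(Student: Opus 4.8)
The plan is to settle membership first and then give a reduction for each of the two graph classes, leaning on the structural characterization of solvable instances described in the introduction.

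\textbf{Membership in \NP.} For both undirected graphs and DAGs, a solvable instance on $n$ tokens has a succinct certificate: the explicit sequence of exactly $n-1$ captures that clears the board. This has size $O(n)$ and is checkable in polynomial time by simulating the moves in order, verifying at each step that the capturing token sits on a vertex adjacent to the captured token (respecting the arc orientation in the DAG case) and that no token ever exceeds its budget of two. Hence \textsc{Graph Capture}(2) lies in \NP{} on both classes.

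\textbf{Undirected graphs.} The first step is to pin down and prove the characterization already advertised: an instance is solvable if and only if $G$ admits a rooted spanning tree in which every internal node has at least one leaf neighbor. For the ``if'' direction I would exhibit a leaf-peeling clearing strategy driven by the tree, in which each internal node is absorbed with the help of a dedicated leaf neighbor, processing the tree from the bottom up so that the at-most-two budget is never violated. For the ``only if'' direction I would take an arbitrary clearing sequence and reconstruct such a tree from the record of which token captures which, arguing that the capture relation yields a spanning structure whose internal nodes each retain a leaf neighbor. Given this equivalence, the \NP-hardness of the game reduces to proving that \emph{deciding the existence of such a spanning tree is itself \NPH}. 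The intended route is a reduction from a covering/domination-flavoured \NPC{} problem (for instance a satisfiability or set-cover variant), which is natural because the requirement ``every internal node has a leaf neighbor'' behaves like a covering constraint: I would design element/variable gadgets whose only legal clearings encode a binary choice, together with constraint gadgets that can be cleared precisely when the corresponding clause or set-cover requirement is met, so that a satisfying assignment corresponds exactly to a qualifying spanning tree.

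\textbf{DAGs.} Because the spanning-tree characterization is stated for undirected graphs, the directed lower bound needs a separate argument. Here I would give a direct reduction (again from a satisfiability-type problem) that exploits acyclicity: lay the gadgets out in layers so that arc orientations force the captures to occur in an essentially topological order, with source tokens carrying the ``choices'' and a designated sink forced to be the surviving token. The budget of two then restricts the branching of each token and forces consistent propagation of truth values, so that the board can be cleared iff the formula is satisfiable.

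\textbf{Main obstacle.} The routine parts are membership and the forward (``construction'') directions of both reductions. The genuinely delicate work is soundness: proving the characterization cleanly under the strict budget-2 constraint, and then designing the gadgets so that \emph{no} unintended clearing sequence exists --- i.e.\ that the only way to clear the board is the one encoding a valid assignment/cover. Ruling out such ``cheating'' strategies, where tokens are used in an order the gadget was not meant to allow, is where the bulk of the case analysis will lie.
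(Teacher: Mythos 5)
Your high-level plan coincides with the paper's: for undirected graphs, first establish the characterization (solvable iff there is a rooted spanning tree in which every internal node has a leaf neighbor) and then show that finding such a tree is \NPH{} via a domination-type problem (the paper uses \textsc{Colorful Red-Blue Dominating Set}); for DAGs, give a separate layered reduction from \textsc{3-SAT} with a designated sink as the surviving token. However, what you have written is a roadmap rather than a proof: the entire mathematical content of the theorem lives in the gadget constructions and their soundness arguments, and these are absent. For the undirected case you never specify the gadgets; the paper's construction hinges on a specific combination of (i) partner vertices for each red vertex, each forced to be internal by attaching two degree-two neighbors that themselves carry leaves, (ii) one guard vertex per color class adjacent to all partners in that class, which can serve as the leaf neighbor of at most one partner, thereby ``selecting'' one red vertex per class, and (iii) a global root vertex $\star$ with a pendant path forcing it to be the root. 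Without exhibiting such gadgets and proving that \emph{every} qualifying spanning tree decodes to a colorful dominating set, the hardness claim is unsupported --- and you yourself flag that ruling out unintended clearings ``is where the bulk of the case analysis will lie,'' which is precisely the part left undone.

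The same applies to the DAG case. The paper's reduction builds a four-layer DAG (clauses $u_i$, literals $v_j^T,v_j^F$, variables $v_j$, sink $w$), and its soundness rests on one sharp combinatorial observation: for each variable, captures cannot be received at \emph{both} $v_j^T$ and $v_j^F$, because each of these tokens must make its last move into $v_j$, and if both had already been captured upon, the last token to arrive at $v_j$ would have zero moves remaining and could neither move to $w$ nor be captured. This single budget argument is what turns the capture record into a consistent truth assignment, and nothing in your proposal identifies it or a substitute for it. So while your approach is the right one and your membership argument is fine, the proof has a genuine gap: both reductions are asserted to exist rather than constructed and verified.
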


We remark that~\Cref{thm:graphcapture} has no immediate implications for {Generalized Solo Chess ($\symknight,d$)}. 

The rest of the paper is organized as follows. We establish the notation that we will use in~\Cref{sec:prelims}. The proof of~\Cref{thm:rooks,thm:rooks1d} is given in~\Cref{sec:rooks1d} and~\Cref{sec:rooks2d}, respectively. The proof of~\Cref{thm:queens} is discussed in~\Cref{sec:queens} and the proof of~\Cref{thm-pawn-win} is given in~\Cref{sec:pawns}. Finally, the proof of~\Cref{thm:graphcapture} is shown separately for undirected graphs and DAGs in~\Cref{sec:gc-undirected,sec:gc-dags}. 

\section{Preliminaries}
\label{sec:prelims}

We use $[n]$ to denote the set $\{1,2,\ldots,n\}$. We consider the following generalization\footnote{Since our focus us on the case when the game is played by pieces of one type only, we do not involve the $\symking$ in our set of pieces. Note that because of the convention that kings are never captured, any such involving only kings is trivial.} of Solo Chess. We fix a subset $P$ of $\{\symqueen,\symrook,\symbishop,\sympawn,\symknight\}$ and a positive integer $d$. The generalized game is played on an infinite two-dimensional board with $n$ pieces. For each piece, we are given an initial location and the maximum number of captures the piece is permitted to make. Such an instance is solvable if there exists a sequence $\sigma$ of $(n-1)$ valid captures with each piece making at most as many captures as it is allowed to make. We note that a capture is valid if it respects the usual rules of movements in chess. The formal definition of the problem is the following.

\begin{tcolorbox}
\textsc{Generalized Solo Chess}$(P,d)$: 

\tcblower 

{\bf Input:} A configuration $C$, which is specified by a list of $n$ triplets $(p,z,m)$, where $p \in P$, $z\in \mathbb{N}\times \mathbb{N}$, and $0 \leq m \leq d$. We use $C_i$ to refer to the $i^{th}$ triplet in $C$.

{\bf Output:} Decide if there exists a sequence of $(n-1)$ captures starting from the board position described by $C$, such that the piece corresponding to $C[i][0]$ moves at most $C[i][2]$ times for all $1 \leq i \leq n$.
\end{tcolorbox}

We note that \textsc{Generalized Solo Chess}$(P,d)$ is interesting when $d \geq 2$. Indeed, when $d = 1$, it can be efficiently determined if an instance of \textsc{Generalized Solo Chess}$(P,1)$ is solvable:

\begin{observation}
When $d=1$, a configuration $C$ is winning if and only if there's a square $z$ containing a piece, such that $z$ is reachable in one move from every other piece.
\end{observation}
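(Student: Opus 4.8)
The plan is to prove both directions of the characterization for the $d=1$ case, where each piece may capture at most once.

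First I would establish the easy direction. Suppose there is a square $z$ containing a piece $P_z$ such that every other piece can reach $z$ in a single move. Then I would construct a winning sequence explicitly: let each of the other $(n-1)$ pieces capture the piece currently standing on $z$ one at a time. Concretely, the piece originally on $z$ is captured first by some piece that can reach $z$; that capturing piece now occupies $z$, and the next piece (which could also reach $z$) captures it, and so on. The key observation is that the target square $z$ stays occupied throughout, so the ``every other piece can reach $z$'' hypothesis guarantees each successive capture is legal in terms of the attacking geometry. Since each of these $(n-1)$ pieces captures exactly once, the bound $d=1$ is respected, and after $(n-1)$ captures only one piece remains on $z$, so the board is cleared. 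I would note the mild subtlety that for pieces whose moves can be blocked (rooks, queens, bishops, pawns), the path must remain unobstructed; since all captures are funneled onto the single square $z$ and no piece ever moves to an intermediate location, the line of sight to $z$ from each remaining piece is preserved, so no new blockers are introduced along these attack lines.

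For the converse, suppose the configuration $C$ is winning. Then there is a sequence of $(n-1)$ captures clearing the board, leaving exactly one piece on some final square $z$. I would argue that $z$ must be reachable in one move from every other piece in the initial configuration. The central point is that since $d=1$, each piece captures at most once, and therefore no piece moves more than once during the entire game. Consider the piece that ends on $z$: it occupies $z$ either because it started there and never moved, or because it made exactly one capture landing on $z$ from its initial position. Every other piece makes exactly one capture (to be removed, each of the $n-1$ other pieces must itself be captured, and since each piece moves at most once, the bookkeeping forces each non-final piece to capture exactly once before being captured). Crucially, because no piece moves twice, every capture is made by a piece still sitting at its \emph{initial} location, and I would show by tracking the sequence that all these captures land on the square $z$ from their starting positions, so every piece attacks $z$ from its initial square.

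The main obstacle I anticipate is making the converse argument fully rigorous, specifically the claim that every capture in a winning $d=1$ sequence is directed at the single square $z$. The delicate part is ruling out ``chains'': a priori one might imagine piece $A$ capturing on square $w \neq z$, then later being captured there by piece $B$. The resolution is a counting/invariant argument: with the $d=1$ constraint, a piece that has already captured once can no longer move, so if $A$ moves to $w$ it is stuck at $w$, and the only way to clear $A$ is for some later piece to capture it at $w$; iterating this, the final survivor's square $z$ is the unique ``sink'' of this process, and every move must ultimately feed into $z$ in a single step because no piece gets a second move to relay. I would formalize this by observing that the capture relation among pieces defines a functional structure in which each non-final piece has out-degree one (its single capture) and must be consumed, forcing every non-final piece's lone capture to target the common endpoint $z$ directly; hence $z$ is reachable in one move from every other piece's initial square, completing the proof.
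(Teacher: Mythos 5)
Your proof follows essentially the same route as the paper: the heart of both arguments is the observation that under $d=1$ a piece that captures onto a square $w$ is frozen there and can only be removed by another capture on $w$, so every square on which a capture is ever made remains occupied to the end of the game and must therefore coincide with the unique final square $z$; your forward direction, with its remark that the set of occupied squares only shrinks so no new blockers appear, is also fine (and slightly more careful than the paper's one-line dismissal).

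One intermediate claim in your converse is wrong as stated, though easily repaired. You assert that ``each non-final piece captures exactly once before being captured,'' justified by the need for each piece to be removed. Being captured does not require having captured: a piece can be taken at its initial square without ever moving. Indeed the count forces this to happen exactly once: there are $n-1$ captures distributed over $n$ pieces with at most one capture each, and the maker of the \emph{last} capture is never subsequently captured and hence is the final survivor, so the survivor makes exactly one capture and exactly one \emph{non-final} piece makes none. That exceptional piece is captured in place, so its initial square is a capture square, hence equals $z$ --- it is precisely the piece initially occupying $z$, which is exactly the piece excluded by the phrase ``every other piece'' in the statement. With that correction your sink argument goes through and yields the paper's conclusion verbatim.
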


\begin{proof}
The sufficiency of this condition is clear; to see the necessity, for each square $y$ on which a capture was made, let $p(y)$ be the last piece to capture on $y$. Then $p(y)$ must be the last piece standing (as it can neither move again nor be captured), and further, $y$ is the occupied square at the end of the game. Since there is exactly one occupied square at the end, this shows that all captures were made to the same square.
\end{proof}


Most of our results rely only on elementary graph-theoretic terminology and the notions of polynomial time reductions and \textsf{NP}-completeness. We refer the reader to the texts~\cite{garey1979computers,west2001introduction} for the relevant background. The well-known~\cite{garey1979computers,cygan2015parameterized}~\textsf{NP}-complete problems that we use in our reductions are the following:

\begin{enumerate}
    \item \textsc{Red-Blue Dominating Set.} Given a bipartite graph $G = (R \uplus B,E)$ and a positive integer $k$, determine if there is a subset $S \subseteq R$, $|S| \leq k$ such that $N[v] \cap S \neq \emptyset$ for all $v \in B$.  
    \item \textsc{Colorful Red-Blue Dominating Set.} Given a bipartite graph $G = (R \uplus B,E)$ where the red vertices are partitioned into $k$ disjoint parts, determine if there is a choice of exactly one vertex from each part such that every blue vertex has at least one neighbor among the chosen vertices.
    \item \textsc{3-SAT.} Given a CNF formula with at most three literals per clause, determine if there is a truth assignment to the variables that satisfies the formula. 
\end{enumerate}

\section{Solo Chess with a single piece}

\subsection{Rooks}
\label{sec:rooks}
\subsubsection{1-Dimensional boards}
\label{sec:rooks1d}
In this section we consider \textsc{Generalized Solo Chess} restricted to one-dimensional board played by rooks. It will be convenient to reason about such instances by using strings to represent game configurations; to this end we introduce some terminology. 




\begin{definition}[Configuration]
\label{defn:configuration}
A \emph{configuration} is string $s$ over $\{0,1,2,\ldots,d,\square\}$. It denotes a board of size $1 \times N$, where $N$ is the length of the string. The cell $(1,j)$ is empty if $s[j] = \square$, and is otherwise occupied by a rook with $b$ moves left where $b := s[j]$. 
\end{definition}

We refer the reader to~\Cref{fig:example1} for an example and how a given board position translates to a configuration as defined above. Informally, a configuration is \emph{solvable} if there is a valid sequence of moves that clears the board.

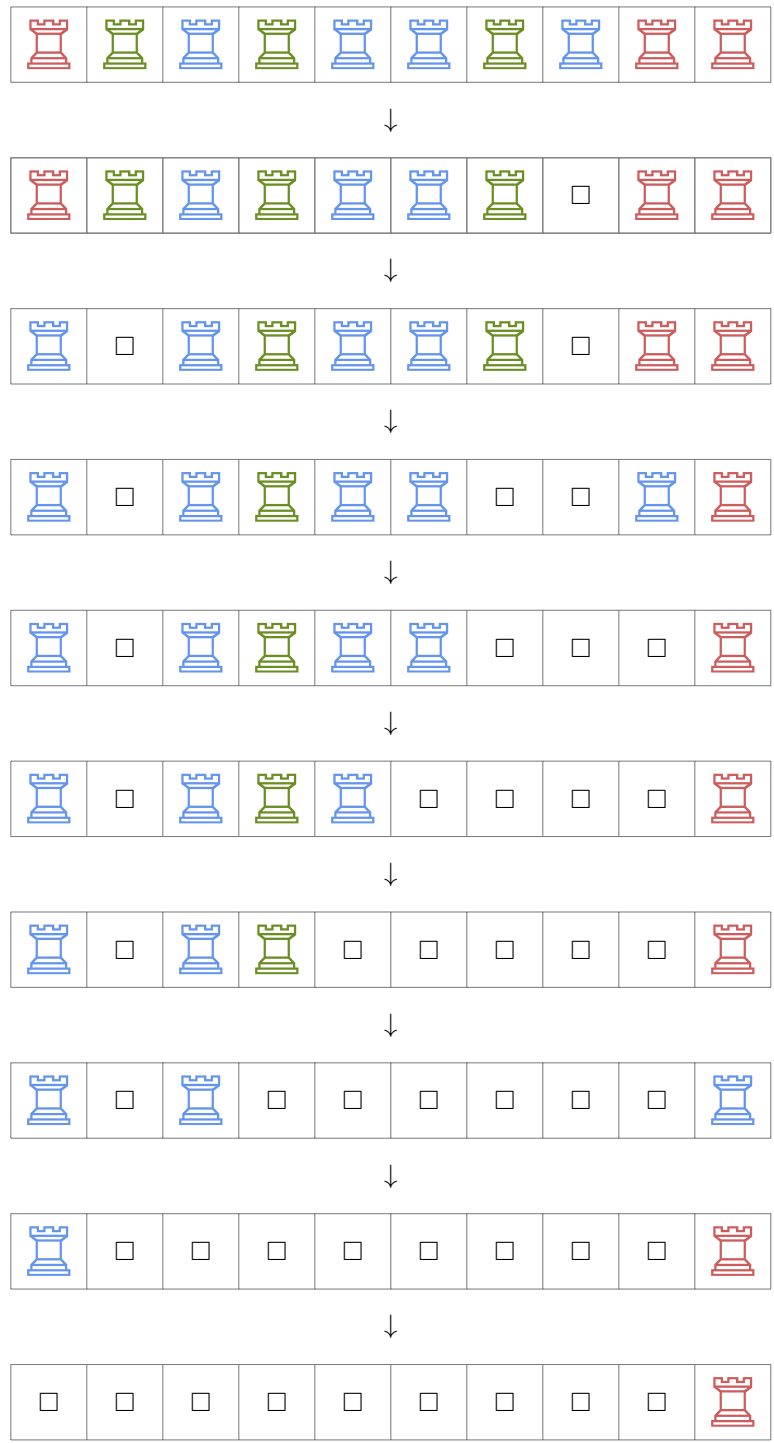
\begin{figure}
    \centering
    \begin{tikzpicture}


\draw[step=1cm,color=gray] (0,20) grid (10,19);
\node at (0.5,19.5) {{\color{IndianRed}{\Huge$\symrook$}}}; 
\node at (1.5,19.5) {{\color{OliveDrab}{\Huge$\symrook$}}}; 
\node at (2.5,19.5) {{\color{CornflowerBlue}{\Huge$\symrook$}}}; 
\node at (3.5,19.5) {{\color{OliveDrab}{\Huge$\symrook$}}}; 
\node at (4.5,19.5) {{\color{CornflowerBlue}{\Huge$\symrook$}}}; 
\node at (5.5,19.5) {{\color{CornflowerBlue}{\Huge$\symrook$}}}; 
\node at (6.5,19.5) {{\color{OliveDrab}{\Huge$\symrook$}}}; 
\node at (7.5,19.5) {{\color{CornflowerBlue}{\Huge$\symrook$}}}; 
\node at (8.5,19.5) {{\color{IndianRed}{\Huge$\symrook$}}}; 
\node at (9.5,19.5) {{\color{IndianRed}{\Huge$\symrook$}}}; 

\node at (5,18.5) {$\downarrow$}; 

\draw[step=1cm,color=gray] (0,18) grid (10,17);
\node at (0.5,17.5) {{\color{IndianRed}{\Huge$\symrook$}}}; 
\node at (1.5,17.5) {{\color{OliveDrab}{\Huge$\symrook$}}}; 
\node at (2.5,17.5) {{\color{CornflowerBlue}{\Huge$\symrook$}}}; 
\node at (3.5,17.5) {{\color{OliveDrab}{\Huge$\symrook$}}}; 
\node at (4.5,17.5) {{\color{CornflowerBlue}{\Huge$\symrook$}}}; 
\node at (5.5,17.5) {{\color{CornflowerBlue}{\Huge$\symrook$}}}; 
\node at (6.5,17.5) {{\color{OliveDrab}{\Huge$\symrook$}}}; 
\node at (7.5,17.5) {$\square$}; 
\node at (8.5,17.5) {{\color{IndianRed}{\Huge$\symrook$}}}; 
\node at (9.5,17.5) {{\color{IndianRed}{\Huge$\symrook$}}};

\node at (5,16.5) {$\downarrow$}; 

\draw[step=1cm,color=gray] (0,16) grid (10,15);

\draw[step=1cm,color=gray] (0,18) grid (10,17);
\node at (0.5,15.5) {{\color{CornflowerBlue}{\Huge$\symrook$}}}; 
\node at (1.5,15.5) {$\square$}; 
\node at (2.5,15.5) {{\color{CornflowerBlue}{\Huge$\symrook$}}}; 
\node at (3.5,15.5) {{\color{OliveDrab}{\Huge$\symrook$}}}; 
\node at (4.5,15.5) {{\color{CornflowerBlue}{\Huge$\symrook$}}}; 
\node at (5.5,15.5) {{\color{CornflowerBlue}{\Huge$\symrook$}}}; 
\node at (6.5,15.5) {{\color{OliveDrab}{\Huge$\symrook$}}}; 
\node at (7.5,15.5) {$\square$}; 
\node at (8.5,15.5) {{\color{IndianRed}{\Huge$\symrook$}}}; 
\node at (9.5,15.5) {{\color{IndianRed}{\Huge$\symrook$}}};

\node at (5,14.5) {$\downarrow$}; 

\draw[step=1cm,color=gray] (0,14) grid (10,13);

\node at (0.5,13.5) {{\color{CornflowerBlue}{\Huge$\symrook$}}}; 
\node at (1.5,13.5) {$\square$}; 
\node at (2.5,13.5) {{\color{CornflowerBlue}{\Huge$\symrook$}}}; 
\node at (3.5,13.5) {{\color{OliveDrab}{\Huge$\symrook$}}}; 
\node at (4.5,13.5) {{\color{CornflowerBlue}{\Huge$\symrook$}}}; 
\node at (5.5,13.5) {{\color{CornflowerBlue}{\Huge$\symrook$}}}; 
\node at (6.5,13.5) {$\square$}; 
\node at (7.5,13.5) {$\square$}; 
\node at (8.5,13.5) {{\color{CornflowerBlue}{\Huge$\symrook$}}}; 
\node at (9.5,13.5) {{\color{IndianRed}{\Huge$\symrook$}}};

\node at (5,12.5) {$\downarrow$}; 

\draw[step=1cm,color=gray] (0,12) grid (10,11);

\node at (0.5,11.5) {{\color{CornflowerBlue}{\Huge$\symrook$}}}; 
\node at (1.5,11.5) {$\square$}; 
\node at (2.5,11.5) {{\color{CornflowerBlue}{\Huge$\symrook$}}}; 
\node at (3.5,11.5) {{\color{OliveDrab}{\Huge$\symrook$}}}; 
\node at (4.5,11.5) {{\color{CornflowerBlue}{\Huge$\symrook$}}}; 
\node at (5.5,11.5) {{\color{CornflowerBlue}{\Huge$\symrook$}}}; 
\node at (6.5,11.5) {$\square$}; 
\node at (7.5,11.5) {$\square$}; 
\node at (8.5,11.5) {$\square$}; 
\node at (9.5,11.5) {{\color{IndianRed}{\Huge$\symrook$}}};

\node at (5,10.5) {$\downarrow$}; 

\draw[step=1cm,color=gray] (0,10) grid (10,9);

\node at (0.5,9.5) {{\color{CornflowerBlue}{\Huge$\symrook$}}}; 
\node at (1.5,9.5) {$\square$}; 
\node at (2.5,9.5) {{\color{CornflowerBlue}{\Huge$\symrook$}}}; 
\node at (3.5,9.5) {{\color{OliveDrab}{\Huge$\symrook$}}}; 
\node at (4.5,9.5) {{\color{CornflowerBlue}{\Huge$\symrook$}}}; 
\node at (5.5,9.5) {$\square$}; 
\node at (6.5,9.5) {$\square$}; 
\node at (7.5,9.5) {$\square$}; 
\node at (8.5,9.5) {$\square$}; 
\node at (9.5,9.5) {{\color{IndianRed}{\Huge$\symrook$}}};

\node at (5,8.5) {$\downarrow$}; 

\draw[step=1cm,color=gray] (0,8) grid (10,7);

\node at (0.5,7.5) {{\color{CornflowerBlue}{\Huge$\symrook$}}}; 
\node at (1.5,7.5) {$\square$}; 
\node at (2.5,7.5) {{\color{CornflowerBlue}{\Huge$\symrook$}}}; 
\node at (3.5,7.5) {{\color{OliveDrab}{\Huge$\symrook$}}}; 
\node at (4.5,7.5) {$\square$}; 
\node at (5.5,7.5) {$\square$}; 
\node at (6.5,7.5) {$\square$}; 
\node at (7.5,7.5) {$\square$}; 
\node at (8.5,7.5) {$\square$}; 
\node at (9.5,7.5) {{\color{IndianRed}{\Huge$\symrook$}}};

\node at (5,6.5) {$\downarrow$}; 

\draw[step=1cm,color=gray] (0,6) grid (10,5);

\node at (0.5,5.5) {{\color{CornflowerBlue}{\Huge$\symrook$}}}; 
\node at (1.5,5.5) {$\square$}; 
\node at (2.5,5.5) {{\color{CornflowerBlue}{\Huge$\symrook$}}}; 
\node at (3.5,5.5) {$\square$}; 
\node at (4.5,5.5) {$\square$}; 
\node at (5.5,5.5) {$\square$}; 
\node at (6.5,5.5) {$\square$}; 
\node at (7.5,5.5) {$\square$}; 
\node at (8.5,5.5) {$\square$}; 
\node at (9.5,5.5) {{\color{CornflowerBlue}{\Huge$\symrook$}}};

\node at (5,4.5) {$\downarrow$}; 

\draw[step=1cm,color=gray] (0,4) grid (10,3);

\node at (0.5,3.5) {{\color{CornflowerBlue}{\Huge$\symrook$}}}; 
\node at (1.5,3.5) {$\square$}; 
\node at (2.5,3.5) {$\square$}; 
\node at (3.5,3.5) {$\square$}; 
\node at (4.5,3.5) {$\square$}; 
\node at (5.5,3.5) {$\square$}; 
\node at (6.5,3.5) {$\square$}; 
\node at (7.5,3.5) {$\square$}; 
\node at (8.5,3.5) {$\square$}; 
\node at (9.5,3.5) {{\color{IndianRed}{\Huge$\symrook$}}};

\node at (5,2.5) {$\downarrow$}; 

\draw[step=1cm,color=gray] (0,2) grid (10,1);

\node at (0.5,1.5) {$\square$}; 
\node at (1.5,1.5) {$\square$}; 
\node at (2.5,1.5) {$\square$}; 
\node at (3.5,1.5) {$\square$}; 
\node at (4.5,1.5) {$\square$}; 
\node at (5.5,1.5) {$\square$}; 
\node at (6.5,1.5) {$\square$}; 
\node at (7.5,1.5) {$\square$}; 
\node at (8.5,1.5) {$\square$}; 
\node at (9.5,1.5) {{\color{IndianRed}{\Huge$\symrook$}}};

\end{tikzpicture}
    \caption{An example of a valid sequence of captures that clears the board. The initial configuration corresponds to the string \texttt{0212112100}. In other words, the red, blue, and green rooks denote rooks with zero, one, and two moves left, respectively. Notice that this is not a unique solution --- there are several other valid sequences that also successfully clear this board.}
    \label{fig:example1}
\end{figure}
\begin{definition}[$\ell$-solvable configuration]
\label{defn:ell-solvableconfiguration} Let $s$ be a configuration of length $N$ and let $1\leq \ell\leq N$. We say that $s$ is $\ell$-solvable if there exists a sequence of moves that clears the corresponding board with the final rook at the cell $(1,\ell)$.
\end{definition}

Our main goal in this section is to establish the following:

\rookseasy*

Note that for any sequence (say $\sigma$) of moves that clears the board with final rook at the cell $(1,\ell)$, no move of $\sigma$ empties the cell $(1,\ell)$ and thus, there's no move of $\sigma$ wherein the cells containing the captured piece and the capturing piece are at different sides, i.e., one at left and the other at right, of the cell $(1,\ell)$. So, note that $s$ is $\ell$-solvable if and only if there is a position such that the sub-configurations to the left and right of location $\ell$ are independently solvable. We now develop a criteria for solving 1D configurations where the target piece is one of the extreme locations on the board. Note that when $d = 2$, the first criteria below amounts to saying that $s$ is $N$-solvable iff $s[N]\neq\square$ and $s[1,\ldots,N-1]$ has at least as many $2$'s as $0$'s; and the second criteria states that $s$ is $1$-solvable iff $s[1]\neq\square$ and $s[2,\ldots,N]$ has at least as many $2$'s as $0$'s. A direct proof of this simpler statement is given in the Appendix.

\begin{lemma}
\label{lem:rooks1Dgen}
For every configuration $s$ of length $N$,
\begin{enumerate}
    \item $s$ is $N$-solvable iff $s[N]\neq\square$ and $\underset{\substack{1\leq i\leq N-1:\\s[i]\not\in\{0,\square\}}}{\sum}\big(s[i]-1\big) \geq$ number of $0$'s in $s[1,\ldots,N-1]$
    \item $s$ is $1$-solvable iff $s[1]\neq\square$ and $\underset{\substack{2\leq i\leq N:\\s[i]\not\in\{0,\square\}}}{\sum}\big(s[i]-1\big) \geq$ number of $0$'s in $s[2,\ldots,N]$
\end{enumerate}
\end{lemma}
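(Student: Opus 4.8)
The plan is to prove the first statement (the criterion for $N$-solvability) in full; the second then follows verbatim by reflecting the string about its centre, which exchanges the two extreme cells and turns $1$-solvability into $N$-solvability. Throughout I call a cell \emph{live} if it holds a rook with at least one move left, \emph{dead} if it holds a rook with $0$ moves left, and \emph{empty} otherwise. Writing $Z$ for the number of dead cells in $s[1,\ldots,N-1]$, the inequality in the lemma is exactly $D\geq 0$, where
\[
D \;:=\; \sum_{\substack{1\leq i\leq N-1:\\ s[i]\notin\{0,\square\}}}\big(s[i]-1\big)\;-\;Z \;=\;\sum_{\substack{1\leq i\leq N-1:\\ s[i]\neq\square}}\big(s[i]-1\big),
\]
the last equality holding because each dead cell contributes $0-1=-1$. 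So it suffices to show that $s$ is $N$-solvable iff $s[N]\neq\square$ and $D\geq 0$. I will treat cell $N$ as a \emph{sink}: by the observation preceding the lemma, no move may empty cell $N$, so the rook currently on $N$ can never initiate a capture (that would vacate $N$), while rooks may be captured \emph{onto} $N$ from the left.

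For \textbf{necessity}, first note that if $s[N]=\square$ then $N$ is never occupied, since a piece could only arrive there by capturing a piece already on $N$; hence no solution ends on $N$ and so $s[N]\neq\square$. Now let $L$ be the number of live cells in $s[1,\ldots,N-1]$. A clearing sequence for the $L+Z+1$ pieces uses exactly $L+Z$ captures. Every capture is initiated by a piece that currently has a move left, hence had positive initial value; dead rooks never move and the sink rook never moves, so every capturer started as one of the $L$ initially-live rooks of $s[1,\ldots,N-1]$. Since a rook initiates at most (its initial value) captures over its lifetime, summing gives
\[
L+Z \;\le\; \sum_{\substack{1\leq i\leq N-1:\\ s[i]\notin\{0,\square\}}} s[i],
\]
which rearranges precisely to $D\geq 0$.

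For \textbf{sufficiency}, assume $s[N]\neq\square$ and $D\geq 0$, and argue by induction on the number of pieces. If $s[1,\ldots,N-1]$ is empty we are done. Otherwise I exhibit a legal capture after which the sink is still occupied, $D$ has not dropped below $0$, and one fewer piece remains, so induction applies. The case distinction is on whether a dead cell is present in $s[1,\ldots,N-1]$. If it is, then since each dead cell contributes $-1$ to $D\geq 0$ a live cell must also be present; scanning the occupied cells left to right, a linear sequence containing both a live and a dead entry has two \emph{consecutive} occupied cells, one live and one dead, with nothing occupied between them. Letting the live rook capture the dead one is a legal slide that removes a dead piece and, as the captured value is $0$, leaves $D$ unchanged (a capture changes $D$ by minus the captured value). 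If instead no dead cell remains, every occupied cell of $s[1,\ldots,N-1]$ is live, and we \emph{funnel}: repeatedly let the rightmost occupied rook capture the sink, sliding rightward onto $N$; each such move is legal and deletes one piece from $s[1,\ldots,N-1]$, clearing it in finitely many steps.

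The routine direction here is necessity, which is pure bookkeeping once the ``sink never captures'' observation is in hand. The main work is the sufficiency construction: the crux is to find a single move that makes progress \emph{without} destroying $D\geq 0$, and the key insight is that capturing a dead rook with an adjacent live rook is ``free'' (it preserves $D$), whereas an all-live configuration can be cleared unconditionally by funneling into the sink. The one point demanding care is that a live rook of value $1$ becomes dead when it captures, so the number of dead cells need not decrease monotonically; this is harmless, since the total piece count strictly decreases and $D\geq 0$ is maintained at every step, which is all the induction requires.
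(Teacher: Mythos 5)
Your proof is correct, and it reaches the characterization by a genuinely different (and in places slicker) route than the paper. For necessity, the paper runs an induction on the surplus $m=\sum_{s[i]\notin\{0,\square\}}(s[i]-1)$, locating the first move made by a non-$1$-rook and appealing to the inductive hypothesis on the resulting configuration; you replace this entirely with a global budget count: since the sink never vacates cell $N$ and dead rooks never move, all $L+Z$ required captures must be charged to the initially live rooks of $s[1,\ldots,N-1]$, each of which can pay at most its initial value, giving $L+Z\leq\sum s[i]$ directly. This is shorter and makes the ``why'' of the inequality more transparent. For sufficiency, both arguments are greedy: the paper inducts on the number of $0$'s after a preprocessing phase in which $1$-rooks absorb adjacent $0$-rooks, and then must argue that a $\geq 2$-rook ends up adjacent to a $0$-rook; you instead induct on the piece count with the potential $D=\sum_{s[i]\neq\square}(s[i]-1)\geq 0$ as the invariant, observing that capturing a dead rook costs nothing ($D$ drops by the captured value) and that an all-live prefix can be funneled into the sink unconditionally. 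Your invariant formulation cleanly handles the point that a $1$-rook becomes dead after capturing, which the paper's bookkeeping has to absorb implicitly. Both directions of your argument check out, including the adjacency claim (consecutive occupied cells of $s[1,\ldots,N-1]$ with differing liveness must exist when both kinds are present, and nothing blocks the slide between them).
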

\begin{proof} We argue the first claim since the proof of the second is symmetric.
For the forward implication, we show (using induction on $m$) that the following statement is true for all integers $m\geq 0$: For every configuration $s$ such that $\underset{\substack{1\leq i\leq N-1:\\s[i]\not\in\{0,\square\}}}{\sum}\big(s[i]-1\big) = m$, if $s$ is $N$-solvable, then $s[N]\neq\square$ and $m\geq$ number of $0$'s in $s[1,\ldots,N-1]$. For the base case, consider $m=0$. The only configurations $s$ with $\underset{\substack{1\leq i\leq N-1:\\s[i]\not\in\{0,\square\}}}{\sum}\big(s[i]-1\big) = 0$ are the ones for which $s[1,\ldots,N-1]$ is a string over $\{0,1,\square\}$.  Among these, the only $N$-solvable configurations $s$ are the ones for which $s[1,\ldots,N-1]$ is a string over $\{1,\square\}$ and $s[N]\neq\square$. Thus, the statement is true for $m=0$.\\\\
As induction hypothesis, assume that the statement is true for all integers $0\leq m\leq p$, for some integer $p\geq 0$. Let's argue that the statement is true for $m=p+1$. Let $s$ be a configuration such that $\underset{\substack{1\leq i\leq N-1:\\s[i]\not\in\{0,\square\}}}{\sum}\big(s[i]-1\big)=p+1$ and $s$ is $N$-solvable. As $s$ is $N$-solvable, there exists a sequence of moves (say $\sigma$) that clears the corresponding $1\times N$ board with the final rook at the cell $(1,N)$. Clearly, $s[N]\neq\square$. Let $t\geq 1$ be the least integer such that the capturing piece in $t^{th}$ move of $\sigma$ is not a $1$-rook. Let $\tilde{s}$ denote the configuration corresponding to the board obtained after $t^{th}$ move of $\sigma$. Note that $\tilde{s}$ is $N$-solvable and $\underset{\substack{1\leq i\leq N-1:\\\tilde{s}[i]\not\in\{0,\square\}}}{\sum}(\tilde{s}[i]-1)\leq p$. Using induction hypothesis, $p\geq$ number of $0's$ in $\tilde{s}[1,\ldots,N-1]$. Also, number of $0$'s in $\tilde{s}[1,\ldots,N-1]\geq$ number of $0$'s in $s[1,\ldots,N-1]-1$; this is because the number of $0$-rooks at the cells $(1,1),\ldots,(1,N-1)$ doesn't decrease in the first $t-1$ moves of $\sigma$, and decreases by at most $1$ in the $t^{th}$ move of $\sigma$. Therefore, we have $p+1\geq$ number of $0$'s in $s[1,\ldots,N-1]$, as desired. 
\\\\
For the converse, we show (using induction on $m$) that the following statement is true for all integers $m\geq 0$: For every configuration $s$ such that $s[N]\neq\square$, if $s[1,\ldots,N-1]$ has exactly $m$ $0's$ and $\underset{\substack{1\leq i\leq N-1:\\s[i]\not\in\{0,\square\}}}{\sum}\big(s[i]-1\big) \geq m$ , then $s$ is $N$-solvable. 

For the base case, consider $m=0$. Let $s$ be a configuration such that $s[N]\neq\square$ and $s[1,\ldots,N-1]$ has no $0$'s. For each $1\leq i< N$, the cell $(1,i)$ in the corresponding board is either empty or has a $1/2/\ldots/d$-rook. The board can be cleared with the final piece at $(1,N)$ by making the $1/2/\ldots/d$-rooks (if any) to capture rook at the cell $(1,N)$. Thus, $s$ is $N$-solvable. So, the statement is true for $m=0$.\\\\
As induction hypothesis, assume that the statement is true for all integers $0\leq m\leq p$, for some integer $p\geq 0$. Let's argue that the statement is true for $m=p+1$. Let $s$ be a configuration such that $s[N]\neq\square$, $s[1,\ldots,N-1]$ has exactly $(p+1)$ $0's$ and $\underset{\substack{1\leq i\leq N-1:\\s[i]\not\in\{0,\square\}}}{\sum}\big(s[i]-1\big) \geq p+1$. While there is a $1$-rook in the cells $(1,1),\ldots,(1,N-1)$ that can capture a $0$-rook in the cells $(1,1),\ldots,(1,N-1)$, make such a move. Once no such move can be made, there exist integers $1\leq u<v< N$ such that $s[u,\ldots,v]= 0\square^{\lambda}x$ or $s[u,\ldots,v]=x\square^{\lambda}0$, for some $\lambda\geq 0$ and some $2\leq x\leq d$. In the former (resp. latter) case, the $x$-rook at the cell $(1,v)$ (resp. $(1,u)$) can be made to capture the $0$-rook at the cell $(1,u)$ (resp. $(1,v)$), and the configuration corresponding to the resulting board is $N$-solvable by induction hypothesis.
\end{proof}

We conclude that a configuration $s$ of length $N$ is solvable iff there exists $1\leq \ell\leq N$ such that
\begin{itemize}
    \item $s[\ell]\neq\square$,
    \item $\underset{\substack{1\leq i\leq \ell-1:\\s[i]\not\in\{0,\square\}}}{\sum}\big(s[i]-1\big) \geq$ number of $0$'s in $s[1,\ldots,\ell-1]$, and
    \item $\underset{\substack{\ell+1\leq i\leq N:\\s[i]\not\in\{0,\square\}}}{\sum}\big(s[i]-1\big) \geq$ number of $0$'s in $s[\ell+1,\ldots,N]$.
\end{itemize}

\subsubsection{2-Dimensional boards}
\label{sec:rooks2d}

\rooks*

\begin{proof}

We reduce from the \textsc{Red-Blue Dominating Set} problem. Let $\mathcal{I} := \langle G = (N \cup T, E); k \rangle$ be an instance of \textsc{Red-Blue Dominating Set}. Recall that $G$ is a bipartite graph with bipartition $N$ and $T$; and $\mathcal{I}$ is a \textsc{Yes}-instance if and only if there exists a subset $S \subseteq N$ of size at most $k$ such that every vertex $v$ in $T$ has a neighbor in $S$. We let the vertices in $N$ be denoted by $[n]$ and let $T := \{v_1, \ldots, v_m\}$. We refer to the vertices of $N$ and $T$ as non-terminals and terminals, respectively.

We first describe the construction of the reduced instance of \textsc{Generalized Solo Chess ($\symrook,2$)} based on $\mathcal{I}$. The game takes place on a $(2m+1) \times (n+m+k+1)$ board. The initial position of the rooks is as follows:

\begin{itemize}
    \item \emph{Non-terminal} rooks. For all $i \in [n]$, we place a $1$-rook in the cell $(2m+1,i)$.
    \item \emph{Terminal} rooks. For all $j \in [m]$, we place a $1$-rook in the cell $(2j-1,\ell)$ for each $\ell$ such that $\ell \in N(v_j)$. 
    \item \emph{Collector} rooks. For all $j \in [m]$, we place a $2$-rook in the cell $(2j-1,n+j)$.
    \item \emph{Cleaner} rooks. For all $\ell \in [k]$, we place a $2$-rook in the cell $(2m+1,n+m+\ell)$.
    \item Target location. Finally, we place on $1$-rook at the location $(2m+1,n+m+k+1)$.
\end{itemize}

The non-terminal and terminal rooks correspond to the non-terminal and terminal vertices in the graph, and their relative positioning as described above captures the graph structure. The rooks on every row are expected to ``clear to one of the columns corresponding to a vertex they are dominated by'', and the other auxiliary rooks added to the board above help with clearing the board after this phase, as explained further below.

\begin{figure}
    \centering
    \includegraphics[scale=0.8]{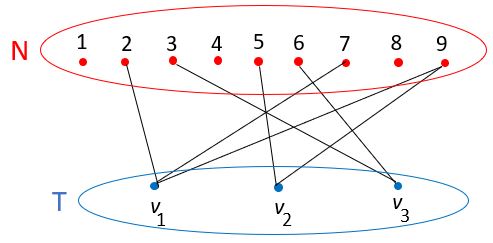}
    \caption{An instance of Red-Blue Dominating Set.}
    \label{fig:rbds}
\end{figure}

\begin{figure}
    \centering
    $$~$$
    \begin{tikzpicture}[scale=0.87]
    
        
        \draw[step=1cm,color=gray] (0,9) grid (15,2);
        
        \node at (14.5,8.5) {{\color{OliveDrab}{$\star$}}}; 
        \node at (13.5,8.5) {{\color{OliveDrab}{\Huge$\symrook$}}}; 
        \node at (12.5,8.5) {{\color{OliveDrab}{\Huge$\symrook$}}}; 
        
        \node at (0.5,8.5) {{\color{CornflowerBlue}{\Huge$\symrook$}}}; 
        \node at (1.5,8.5) {{\color{CornflowerBlue}{\Huge$\symrook$}}}; 
        \node at (2.5,8.5) {{\color{CornflowerBlue}{\Huge$\symrook$}}}; 
        
        \node at (0.5,9.5) {$1$}; 
        \node at (1.5,9.5) {$2$}; 
        \node at (2.5,9.5) {$3$}; 
        
        \node at (3.5,8.5) {{\color{CornflowerBlue}{\Huge$\symrook$}}}; 
        \node at (4.5,8.5) {{\color{CornflowerBlue}{\Huge$\symrook$}}}; 
        \node at (5.5,8.5) {{\color{CornflowerBlue}{\Huge$\symrook$}}}; 
        
        \node at (3.5,9.5) {$4$}; 
        \node at (4.5,9.5) {$5$}; 
        \node at (5.5,9.5) {$6$}; 
        
        \node at (6.5,8.5) {{\color{CornflowerBlue}{\Huge$\symrook$}}}; 
        \node at (7.5,8.5) {{\color{CornflowerBlue}{\Huge$\symrook$}}}; 
        \node at (8.5,8.5) {{\color{CornflowerBlue}{\Huge$\symrook$}}}; 
        
        \node at (6.5,9.5) {$7$}; 
        \node at (7.5,9.5) {$8$}; 
        \node at (8.5,9.5) {$9$};

        \node at (2.5,6.5)  {{\color{CornflowerBlue}{\Huge$\symrook$}}}; 
        \node at (5.5,6.5)  {{\color{CornflowerBlue}{\Huge$\symrook$}}}; 
        \node at (9.5,2.5)  {{\color{OliveDrab}{\Huge$\symrook$}}}; 
        
        \node at(-0.5,6.5) {$v_3$};

        \node at (4.5,4.5)  {{\color{CornflowerBlue}{\Huge$\symrook$}}}; 
        \node at (8.5,4.5)  {{\color{CornflowerBlue}{\Huge$\symrook$}}}; 
        \node at (10.5,4.5)  {{\color{OliveDrab}{\Huge$\symrook$}}}; 
        
        \node at(-0.5,4.5) {$v_2$}; 
        
        \node at (1.5,2.5)  {{\color{CornflowerBlue}{\Huge$\symrook$}}}; 
        \node at (6.5,2.5)  {{\color{CornflowerBlue}{\Huge$\symrook$}}}; 
        \node at (8.5,2.5)  {{\color{CornflowerBlue}{\Huge$\symrook$}}}; 
        \node at (11.5,6.5)  {{\color{OliveDrab}{\Huge$\symrook$}}}; 
        
        \node at(-0.5,2.5) {$v_1$}; 

\end{tikzpicture}
    \caption{The reduced instance of~\textsc{Generalized Solo Chess} played by rooks corresponding to the instance shown in~\Cref{fig:rbds}.}
    \label{fig:example1redn}
\end{figure}
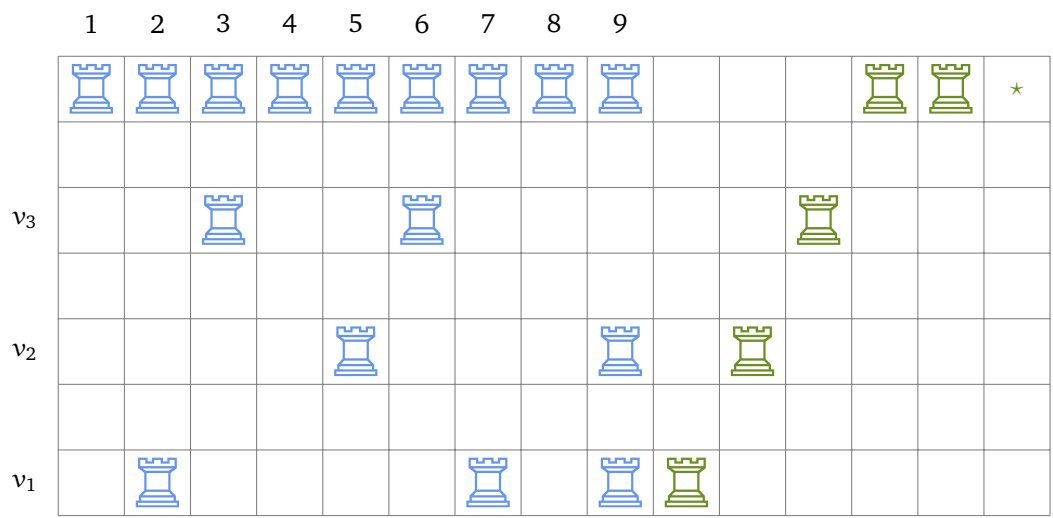

\paragraph*{Forward Direction}
Assume that $\mathcal{I}$ is a YES-instance. That is, there exists $S\subseteq [n]$ of size at most $k$ such that every vertex in $T$ has a neighbour in $S$. For each $j\in[m]$, let $f(j)$ denote an arbitrary but fixed $i\in S$ such that $i\in N(v_j)$. Consider the following sequence of moves (c.f.~\Cref{fig:threegraphs}):

\begin{itemize}
    \item For each $j\in[m]$ and each $\ell\in N(v_j)\setminus\{f(j)\}$, the terminal $1$-rook at the cell $(2j-1,\ell)$ captures rook at the cell $(2j-1,f(j))$.
    \item For each $j\in[m]$, the collector $2$-rook at the cell $(2j-1,n+j)$ captures the rook at the cell $(2j-1,f(j))$, and the $1$-rook at the cell $(2j-1,f(j))$ so obtained then captures rook at the cell $(2m+1,f(j))$.
    \item For each $i\in[n]$, if there's a non-terminal $1$-rook at the cell $(2m+1,i)$, then it captures one of the $0$-rooks at the cells $(2m+1,f(1)),\ldots,(2m+1,f(m))$.
\end{itemize}
Now, the board is empty except for the top row which has one $1$-rook at the target location, $k$ cleaner $2$-rooks and at most $k$ $0$-rooks, i.e., the $0$-rooks at the cells $(2m+1,f(1)),\ldots,(2m+1,f(m))$. Using Lemma \ref{lem:rooks1Dgen}, this corresponds to a $(n+m+k+1)$-solvable configuration.

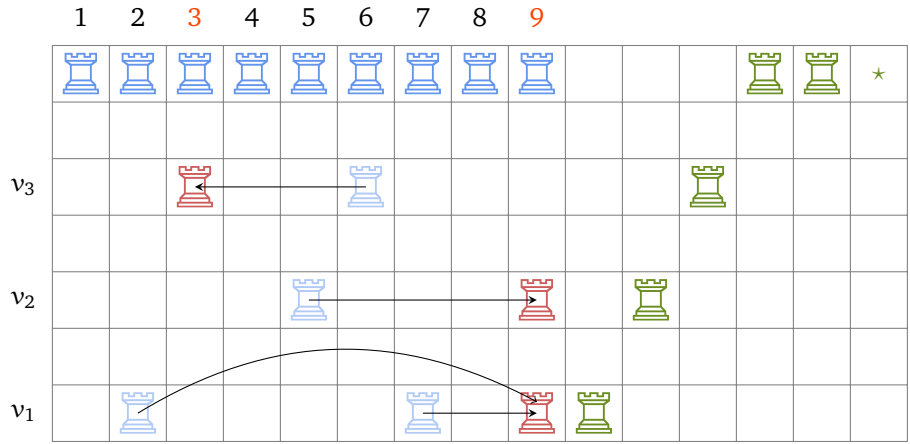
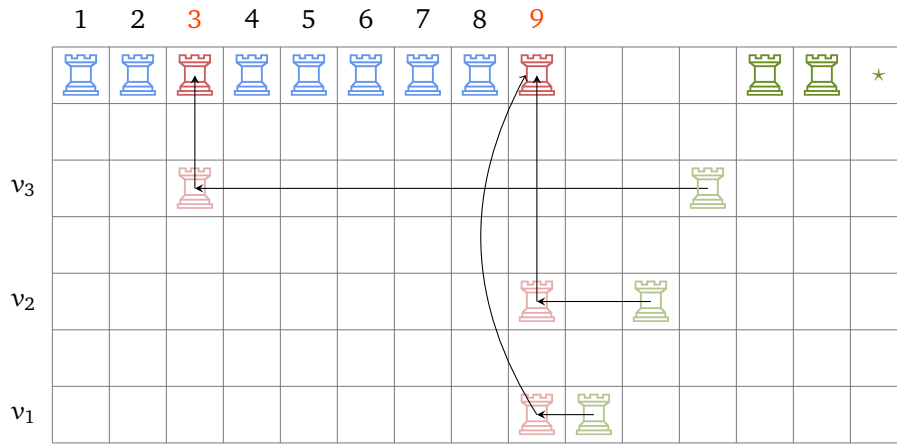
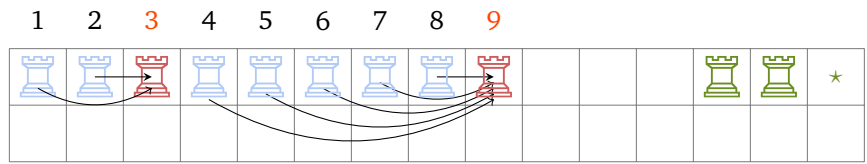
\begin{figure}
     \centering
     \begin{subfigure}[b]{\textwidth}
        \centering
         
        \begin{tikzpicture}[scale=0.75]

        \draw[step=1cm,color=gray] (0,9) grid (15,2);
        
        \node at (14.5,8.5) {{\color{OliveDrab}{$\star$}}}; 
        \node at (13.5,8.5) {{\color{OliveDrab}{\huge$\symrook$}}}; 
        \node at (12.5,8.5) {{\color{OliveDrab}{\huge$\symrook$}}}; 
        
        \node at (0.5,8.5) {{\color{CornflowerBlue}{\huge$\symrook$}}}; 
        \node at (1.5,8.5) {{\color{CornflowerBlue}{\huge$\symrook$}}}; 
        \node at (2.5,8.5) {{\color{CornflowerBlue}{\huge$\symrook$}}}; 
        
        \node at (0.5,9.5) {$1$}; 
        \node at (1.5,9.5) {$2$}; 
        \node at (2.5,9.5) {${\color{OrangeRed}3}$}; 
        
        \node at (3.5,8.5) {{\color{CornflowerBlue}{\huge$\symrook$}}}; 
        \node at (4.5,8.5) {{\color{CornflowerBlue}{\huge$\symrook$}}}; 
        \node at (5.5,8.5) {{\color{CornflowerBlue}{\huge$\symrook$}}}; 
        
        \node at (3.5,9.5) {$4$}; 
        \node at (4.5,9.5) {$5$}; 
        \node at (5.5,9.5) {$6$}; 
        
        \node at (6.5,8.5) {{\color{CornflowerBlue}{\huge$\symrook$}}}; 
        \node at (7.5,8.5) {{\color{CornflowerBlue}{\huge$\symrook$}}}; 
        \node at (8.5,8.5) {{\color{CornflowerBlue}{\huge$\symrook$}}}; 
        
        \node at (6.5,9.5) {$7$}; 
        \node at (7.5,9.5) {$8$}; 
        \node at (8.5,9.5) {${\color{OrangeRed}9}$};

        \node at (2.5,6.5)  {{\color{IndianRed}{\huge$\symrook$}}}; 
        \node at (5.5,6.5)  {{\color{CornflowerBlue!50}{\huge$\symrook$}}};
        
        \draw [stealth-](2.5,6.5) -- (5.5,6.5);
        
        \node at (9.5,2.5)  {{\color{OliveDrab}{\huge$\symrook$}}}; 
        
        \node at(-0.5,6.5) {$v_3$};

        \node at (4.5,4.5)  {{\color{CornflowerBlue!50}{\huge$\symrook$}}}; 
        \node at (8.5,4.5)  {{\color{IndianRed}{\huge$\symrook$}}};
        
        \draw [-stealth](4.5,4.5) -- (8.5,4.5);
        
        \node at (10.5,4.5)  {{\color{OliveDrab}{\huge$\symrook$}}}; 
        
        \node at(-0.5,4.5) {$v_2$}; 
        
        \node at (1.5,2.5)  {{\color{CornflowerBlue!50}{\huge$\symrook$}}}; 
        \node at (6.5,2.5)  {{\color{CornflowerBlue!50}{\huge$\symrook$}}}; 
        \node at (8.5,2.5)  {{\color{IndianRed}{\huge$\symrook$}}};
        
        \draw [-stealth](6.5,2.5) -- (8.5,2.5);
        
        \draw[->] (1.5,2.5) to[bend left] (8.5,2.7);
        
        \node at (11.5,6.5)  {{\color{OliveDrab}{\huge$\symrook$}}}; 
        
        \node at(-0.5,2.5) {$v_1$}; 

\end{tikzpicture}
         
        \caption{All blue rooks on rows corresponding to blue vertices clear row-wise to the dominating set vertices.}
        \label{fig:fwd-rooks-step1}
     \end{subfigure}
     \hfill
     \begin{subfigure}[b]{\textwidth}
        \centering
         
        \begin{tikzpicture}[scale=0.75]

 \draw[step=1cm,color=gray] (0,9) grid (15,2);
        
        \node at (14.5,8.5) {{\color{OliveDrab}{$\star$}}}; 
        \node at (13.5,8.5) {{\color{OliveDrab}{\huge$\symrook$}}}; 
        \node at (12.5,8.5) {{\color{OliveDrab}{\huge$\symrook$}}}; 
        
        \node at (0.5,8.5) {{\color{CornflowerBlue}{\huge$\symrook$}}}; 
        \node at (1.5,8.5) {{\color{CornflowerBlue}{\huge$\symrook$}}}; 
        \node at (2.5,8.5) {{\color{IndianRed}{\huge$\symrook$}}}; 
        
        \node at (0.5,9.5) {$1$}; 
        \node at (1.5,9.5) {$2$}; 
        \node at (2.5,9.5) {${\color{OrangeRed}3}$}; 
        
        \node at (3.5,8.5) {{\color{CornflowerBlue}{\huge$\symrook$}}}; 
        \node at (4.5,8.5) {{\color{CornflowerBlue}{\huge$\symrook$}}}; 
        \node at (5.5,8.5) {{\color{CornflowerBlue}{\huge$\symrook$}}}; 
        
        \node at (3.5,9.5) {$4$}; 
        \node at (4.5,9.5) {$5$}; 
        \node at (5.5,9.5) {$6$}; 
        
        \node at (6.5,8.5) {{\color{CornflowerBlue}{\huge$\symrook$}}}; 
        \node at (7.5,8.5) {{\color{CornflowerBlue}{\huge$\symrook$}}}; 
        \node at (8.5,8.5) {{\color{IndianRed}{\huge$\symrook$}}}; 
        
        \node at (6.5,9.5) {$7$}; 
        \node at (7.5,9.5) {$8$}; 
        \node at (8.5,9.5) {${\color{OrangeRed}9}$};

        \node at (2.5,6.5)  {{\color{IndianRed!50}{\huge$\symrook$}}}; 
        \node at (9.5,2.5)  {{\color{OliveDrab!50}{\huge$\symrook$}}}; 
        
        \draw [stealth-](2.5,6.5) -- (11.5,6.5);
        
        \draw [stealth-](2.5,8.5) -- (2.5,6.5);
        
        \node at(-0.5,6.5) {$v_3$};

        \node at (8.5,4.5)  {{\color{IndianRed!50}{\huge$\symrook$}}}; 
        \node at (10.5,4.5)  {{\color{OliveDrab!50}{\huge$\symrook$}}}; 
        
        \draw [stealth-](8.5,4.5) -- (10.5,4.5);
        
        \node at(-0.5,4.5) {$v_2$}; 
        
        \node at (8.5,2.5)  {{\color{IndianRed!50}{\huge$\symrook$}}}; 
        \node at (11.5,6.5)  {{\color{OliveDrab!50}{\huge$\symrook$}}}; 
        
        \draw [stealth-](8.5,2.5) -- (9.5,2.5);
        
        \draw [stealth-](8.5,8.5) -- (8.5,4.5);
        
        \draw[->] (8.5,2.5) to[bend left] (8.3,8.5);
        
        \node at(-0.5,2.5) {$v_1$};
        \end{tikzpicture}
         
        \caption{The green rooks on the rows corresponding to blue vertices ``pick up'' the red rooks and capture along the column to get the rook on the top row.}
        \label{fig:fwd-rooks-step2}
     \end{subfigure}
     \hfill
     \begin{subfigure}[b]{\textwidth}
        \centering
         
        \begin{tikzpicture}[scale=0.75]
    
        
        \draw[step=1cm,color=gray] (0,10) grid (15,8);
        
        \node at (14.5,9.5) {{\color{OliveDrab}{$\star$}}}; 
        \node at (13.5,9.5) {{\color{OliveDrab}{\huge$\symrook$}}}; 
        \node at (12.5,9.5) {{\color{OliveDrab}{\huge$\symrook$}}}; 
        
        \node at (0.5,9.5) {{\color{CornflowerBlue!50}{\huge$\symrook$}}}; 
        \node at (1.5,9.5) {{\color{CornflowerBlue!50}{\huge$\symrook$}}}; 
        \node at (2.5,9.5) {{\color{IndianRed}{\huge$\symrook$}}}; 
        
        \draw[->] (0.5,9.3) to[bend right] (2.5,9.3);
        
        \draw[->] (6.5,9.4) to[bend right] (8.5,9.4);
        
        \draw[->] (5.5,9.3) to[bend right] (8.5,9.3);
        
        \draw[->] (4.5,9.2) to[bend right] (8.5,9.2);
        
        \draw[->] (3.5,9.1) to[bend right] (8.5,9.1);
        
        \draw [-stealth](1.5,9.5) -- (2.5,9.5);
        
        \draw [-stealth](7.5,9.5) -- (8.5,9.5);

        \node at(-0.5,8.5) {{\color{white}$v_1$}};
        
        \node at (0.5,10.5) {$1$}; 
        \node at (1.5,10.5) {$2$}; 
        \node at (2.5,10.5) {{\color{OrangeRed}$3$}}; 
        
        \node at (3.5,9.5) {{\color{CornflowerBlue!50}{\huge$\symrook$}}}; 
        \node at (4.5,9.5) {{\color{CornflowerBlue!50}{\huge$\symrook$}}}; 
        \node at (5.5,9.5) {{\color{CornflowerBlue!50}{\huge$\symrook$}}}; 
        
        \node at (3.5,10.5) {$4$}; 
        \node at (4.5,10.5) {$5$}; 
        \node at (5.5,10.5) {$6$}; 
        
        \node at (6.5,9.5) {{\color{CornflowerBlue!50}{\huge$\symrook$}}}; 
        \node at (7.5,9.5) {{\color{CornflowerBlue!50}{\huge$\symrook$}}}; 
        \node at (8.5,9.5) {{\color{IndianRed}{\huge$\symrook$}}}; 
        
        \node at (6.5,10.5) {$7$}; 
        \node at (7.5,10.5) {$8$}; 
        \node at (8.5,10.5) {{\color{OrangeRed}$9$}};

        

        
        
        

        \end{tikzpicture}
         
        \caption{All blue rooks on the top row capture one of the red rooks leaving us in a solvable state with the two green rooks making the final captures.}
        \label{fig:fwd-rooks-step3}
     \end{subfigure}
        \caption{All illustration of the forward direction.}
        \label{fig:threegraphs}
\end{figure}

\paragraph*{Reverse Direction} Suppose the reduced instance is solvable. We first make some claims about any valid sequence of $s$ moves, denoted by $\sigma$, that clears the board. Let $\rho_\sigma((x,y),\ell)$ denote the type of the piece at the location $(x,y)$ after $\ell$ moves of $\sigma$ have been played. If $(x,y)$ is an empty location after $\ell$ moves of $\sigma$ have been played, then we let $\rho_\sigma((x,y),\ell) = \square$.

Let $\zeta(t)$ denote the set of locations $(2m+1,\cdot)$ occupied by red rooks on the top row of the board after $t$ moves of $\sigma$ have been made, in other words: $\zeta(t) = \{i ~|~ \rho_\sigma((2m+1,i),t) = {\color{IndianRed}\symrook}\}$.

\begin{claim}
$| \cup_{1 \leq t \leq s}\zeta(t)| \leq k$. 
\end{claim}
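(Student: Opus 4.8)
The plan is to charge each distinct top-row column that ever carries a red ($0$-)rook to a \emph{distinct} cleaner rook, and then invoke the fact that there are exactly $k$ cleaners. Write $Z := \bigcup_{1\le t\le s}\zeta(t)$ for the set of such columns. Since the initial configuration of the reduced instance contains only $1$-rooks and $2$-rooks, every red rook is created by a capture in which a $1$-rook lands on the top row, and a red rook can never move again: once present, it can only disappear by being captured.

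First I would isolate, for each column $i\in Z$, a single ``clearing move''. Fix $i\in Z$ and let $\tau_i$ be the \emph{last} time at which $(2m+1,i)$ holds a $0$-rook. If that $0$-rook is the piece that survives to the very end, then $i$ is the column of the final rook, which I set aside. Otherwise this $0$-rook is eventually captured, say by a piece $R_i$ that lands on $(2m+1,i)$. Because $(2m+1,i)$ holds no $0$-rook after time $\tau_i$, the piece $R_i$ is \emph{not} reduced to a $0$-rook by this capture, so $R_i$ had at least two moves remaining just before it. As $d=2$, a rook has two moves remaining exactly when it has not yet captured; hence $R_i$ is a $2$-rook making its \emph{first} capture, and that first capture lands on the top row.

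The crucial step is to show that such a move can be performed only by a cleaner, never by a collector. A collector starts at $(2j-1,n+j)$, and its column $n+j$ contains no other piece initially (the non-terminals and terminals occupy columns in $[n]$, the remaining collectors occupy other columns of $\{n+1,\dots,n+m\}$, and the cleaners and target occupy columns $>n+m$). Therefore a collector's first capture must be a \emph{horizontal} move inside its own row $2j-1<2m+1$, landing off the top row. Consequently no collector's first ($2\!\to\!1$) capture ever lands on the top row, and $R_i$ must be one of the $k$ cleaners. I expect this geometric exclusion of the collectors to be the heart of the argument: it is precisely what yields the bound $k$ (cleaners only) rather than the weaker $m+k$ one would get by permitting all $2$-rooks.

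Finally I would verify that $i\mapsto R_i$ is injective: a cleaner has a unique $2\!\to\!1$ transition (its first capture), and $R_i$ uses exactly that transition to land on $(2m+1,i)$, so the landing cell determines $i$. Thus the columns of $Z$, other than the column of the final rook, inject into the $k$ cleaners, giving $|Z|\le k$. The delicate point I would have to treat carefully — and the main obstacle to obtaining the exact constant $k$ rather than $k+1$ — is precisely the set-aside column of the final surviving rook: if that rook is itself a top-row $0$-rook it is charged to no cleaner, so one must argue separately that this case does not inflate the count, for instance by showing that in any clearing sequence one may take the surviving rook to rest at the target column $n+m+k+1$, which never hosts a $0$-rook in such a solution.
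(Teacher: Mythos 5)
Your proof is correct and follows essentially the same strategy as the paper's: every column that ever hosts a top-row $0$-rook must receive a capture by a rook with two moves remaining, such a capture can only be performed by one of the $k$ cleaners, and each cleaner can perform at most one such capture. The only real difference is cosmetic — you locate the forced $2$-rook capture via the \emph{last} time the cell holds a $0$-rook, whereas the paper works from the \emph{first} move that empties the cell and deduces that a $2$-rook must have captured there in between; you are also more explicit than the paper about why a collector's first capture cannot land on the top row and about the injectivity of the charging map. The edge case you flag at the end (the surviving rook resting as a $0$-rook on the top row, which would escape the charging argument) is silently elided in the paper's proof as well — the paper simply posits "the first move of $\sigma$ that empties the cell" without checking it exists — so your version is, if anything, the more careful of the two; a clean way to close it is to observe that at most one cell is never emptied, and that this is harmless for the way the claim is ultimately used.
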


\begin{proof}
Let $i\in \underset{1\leq t\leq s}{\bigcup}\zeta(t)$. That is, there exists $1\leq t\leq s$ such that the cell $(2m+1,i)$ has a $0$-rook after $t$ moves of $\sigma$. Let $p>t$ denote the first move of $\sigma$ that empties the cell $(2m+1,i)$. Note that the cell $(2m+1,i)$ has a $1$-rook before the $p^{th}$ move of $\sigma$. So, there exists $t<q<p$ such that a $2$-rook captures $0$-rook at cell $(2m+1,i)$ in $q^{th}$ move of $\sigma$. Also, such a $2$-rook is one among the $k$ cleaner rooks. Thus, $\big|\underset{1\leq t\leq s}{\bigcup}\zeta(t)\big|\leq k$.
\end{proof} 

Let $j \in [m]$ and $i \in [n]$. We say that $i$ is an \emph{$j$-affected index} if there is some $t \in [s]$ such that the rook at position $(2j-1,i)$ was captured by the green rook originally at position $(2j-1,n+j)$ in the $t^{th}$ move of $\sigma$. 

\begin{claim}
For a fixed $j \in [m]$, there is exactly one $i \in [n]$ such that $i$ is a $j$-affected index.
\end{claim}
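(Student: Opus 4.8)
The plan is to pin down the behaviour of the single collector rook $R_j$ originally placed at $(2j-1,n+j)$ and to show that the number of captures it makes on row $2j-1$ is forced to be exactly one. First I would record the structural backbone. Column $n+j$ is \emph{dead}: no rook other than the one occupying $(2j-1,n+j)$ ever appears in it, because a rook can enter a column only by capturing along that column, and the only cell of column $n+j$ that is ever occupied is $(2j-1,n+j)$ itself; a capture there leaves the capturer on row $2j-1$, never elsewhere in the column. Hence, as long as $R_j$ sits in column $n+j$ it can capture only horizontally, and since every other rook on row $2j-1$ lies at a column $\le n$, only leftward. A one-line induction shows that the set of columns in $[n]$ occupied on row $2j-1$ never grows beyond $N(v_j)$ (a capture landing at $(2j-1,i)$ requires a rook already at $(2j-1,i)$), so every cell $R_j$ captures on its row is some $(2j-1,i)$ with $i\in N(v_j)$; this already places every $j$-affected index inside $N(v_j)$. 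Moreover, any vertical (descending) capture by $R_j$ must be preceded by at least one horizontal capture, since $R_j$ cannot descend from the dead column $n+j$; so after descending $R_j$ has spent both of its captures and can never move again, and in particular never returns to row $2j-1$. Thus the number of $j$-affected indices equals the number $h_j\in\{0,1,2\}$ of horizontal captures $R_j$ makes before any descent.

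The linchpin is that $R_j$ \emph{must} descend, i.e.\ must make at least one vertical capture. Granting this, the count is immediate: to descend, $R_j$ must first stand on a \emph{live} column $i\in[n]$, which forces $h_j\ge 1$; and since descending consumes a capture from a budget of $2$, we get $h_j\le 1$, hence $h_j=1$. So the whole claim reduces to excluding the possibility that $R_j$ is permanently trapped on row $2j-1$. I would argue the contrapositive: if $R_j$ never descends, then some rook is confined to row $2j-1$ for the remainder of the game --- either $R_j$ itself (still at the dead column $n+j$ when $h_j=0$, or a stranded $0$-rook at a live column when $h_j=2$), or a terminal rook that captured $R_j$ at $(2j-1,n+j)$ and is now a $0$-rook trapped in the dead column. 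In each case the trapped rook can neither move nor be reached for capture without a fresh rook inheriting the same trap, so the board cannot be cleared to a single rook unless that trapped rook is itself the lone survivor.

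The remaining, and most delicate, step is to rule out the degenerate solution in which the unique survivor sits on a collector row. Here I would exploit that the right end of the top row is equally rigid: the target column $n+m+k+1$ and each cleaner column $n+m+\ell$ ($\ell\in[k]$) are themselves dead, so the target rook and the cleaner rooks can never leave the top row and must be cleared on it. Applying the one-dimensional solvability criterion of~\Cref{lem:rooks1Dgen} to the top row --- on which only columns $1,\dots,n$ are live --- I would show that the top row cannot be emptied completely, so the lone survivor must lie on the top row and no collector row can host it. This forces $R_j$ to descend for every $j$, and therefore $h_j=1$, giving exactly one $j$-affected index. I expect this last step --- marrying the global ``single surviving rook'' constraint to the local dead-column structure so as to formally kill the trapped-survivor case --- to be the main obstacle, whereas the dead-column bookkeeping and the $2$-budget arithmetic are routine.
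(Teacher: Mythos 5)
Your reorganization of the claim around the quantity $h_j$ (the number of horizontal captures made by the collector $R_j$) is a genuinely different, more global route than the paper's. The paper argues locally: it considers the \emph{first} move in which the collector participates, rules out its being captured first (the capturing terminal $1$-rook would become a $0$-rook stranded at the dead cell $(2j-1,n+j)$, which no $2$-rook can ever reach), and concludes that this first participation is a horizontal capture of a terminal rook; a second horizontal capture is excluded by the same stranding argument, which is spelled out in the proof of the following claim. Your dead-column bookkeeping, the identification of the number of $j$-affected indices with $h_j$, and the budget arithmetic giving $h_j=1$ once a vertical capture is forced are all correct and consistent with what the construction provides.

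The gap sits exactly where you flag it, and it is real. First, the assertion that the cleaner rooks ``can never leave the top row'' is false: a cleaner may spend its first capture moving horizontally to some $(2m+1,i)$ with $i\in[n]$ and then capture down column $i$, exiting the top row. Second, and more importantly, \Cref{lem:rooks1Dgen} cannot be invoked on the top row as though it were a closed one-dimensional game: pieces enter row $2m+1$ from below (terminal rooks and collectors coming up the live columns $1,\dots,n$) and also leave it downward through those columns, so ``the top row cannot be emptied'' does not follow from the lemma as stated; it would require a separate invariant (for instance, that every capture \emph{into} the top row is made by a piece holding exactly one remaining move, so that dismantling the block of cleaner/target cells necessarily strands an unrescuable $0$-rook on row $2m+1$). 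You have also taken on more than is needed: to force $h_j\ge 1$ and $h_j\le 1$ one only has to exclude the lone survivor from the specific cells $(2j-1,n+j)$ and $(2j-1,i')$ where a stranded $0$-rook would sit, not to locate the survivor on the top row. To be fair, the paper itself silently assumes those specific cells must eventually be emptied, so you have correctly located the soft spot of the argument; but as written your proposal neither proves the top-row claim nor supplies a workable tool for it, so the linchpin ``$R_j$ must descend'' remains unestablished.
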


\begin{proof}
Let $j\in[m]$. Let $t\in[s]$ denote the first move of $\sigma$ wherein the collector 2-rook (say $g$) at the cell $(2j-1,n+j)$ either gets captured (Case 1) or captures (Case 2).\\\\
In Case 1, a terminal $1$-rook in the row $2j-1$ captures $g$ in the $t^{th}$ move of $\sigma$. After the $t^{th}$ move of $\sigma$, the cell $(2j-1,n+j)$ has a $0$-rook. Let $p>t$ denote the first move of $\sigma$ that empties the cell $(2j-1,n+j)$. Note that the cell $(2j-1,n+j)$ has a $1$-rook before the $p$-th move of $\sigma$. So, there exists $t<q<p$ such that a $2$-rook captures $0$-rook at the cell $(2j-1,n+j)$ in $q^{th}$ move of $\sigma$. However, no such $2$-rook exists on the board. Thus, Case 1 does not arise.\\\\
In Case 2, there exists $i\in[n]$ such that $g$ captures the rook at the cell $(2j-1,i)$ in the $t^{th}$ move of $\sigma$. Note that $i$ is the unique $j$-affected index.
\end{proof} 

We call $i \in [n]$ an \emph{affected index} if there is some $t \in [s]$ such that the position $(2m+1,i)$ was occupied by a red rook after $t$ moves of $\sigma$. 

\begin{claim}
If $i \in [n]$ is a $j$-affected index for some $j \in [m]$, then $i$ is also an affected index.
\end{claim}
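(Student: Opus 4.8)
The plan is to start from what the previous claim already gives us and then follow the collector rook of row $2j-1$ until the board is cleared. Since $i$ is a $j$-affected index, by (the proof of) the claim that every $j$ admits a unique $j$-affected index, the collector $2$-rook $g$ originally at $(2j-1,n+j)$ makes its \emph{first} move a capture, and that capture lands it on $(2j-1,i)$; hence immediately after the move $t$ witnessing this, the piece $g$ is a $1$-rook sitting on $(2j-1,i)$, and $(2j-1,i)$ is not the cell holding the last surviving rook. The structural fact I would isolate first is a \emph{vacating principle}: a non-final cell that ever carries a $0$-rook can be emptied only if, at some later move, a rook that still has both of its captures steps onto it. Indeed, a $1$-rook that captures becomes an immovable $0$-rook, so emptying such a cell needs a genuine $2$-rook to spend one capture removing the $0$-rook and a second capture to leave.

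With this in hand, I would argue by contradiction: suppose $(2m+1,i)$ never carries a $0$-rook, i.e.\ $i$ is not affected. Because $i\in[n]$, the cell $(2m+1,i)$ starts with the non-terminal $1$-rook and lies in column $i$, the same column as the freshly created $1$-rook $g$. I would then trace how column $i$ --- and in particular the two cells $(2j-1,i)$ and $(2m+1,i)$ --- gets cleared. The point to extract is that forbidding a $0$-rook on $(2m+1,i)$ forces every capture landing on $(2m+1,i)$ to be made by a full $2$-rook, and, propagating this constraint down column $i$, forces the emptying of $(2j-1,i)$ to consume $2$-rook capacity as well. These demands are then weighed against the supply of full $2$-rooks: by the uniqueness-of-$j$-affected-index claim each of the $m$ collectors has already committed its first capture inside its own row, and by the first claim the $k$ cleaners account for at most $k$ red rooks on the top row. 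Combining this with the global observation that the total capture capacity exceeds the $R-1$ captures actually performed by exactly $m+k+1$, the supply is too small, and a contradiction follows. Hence a $0$-rook must appear on $(2m+1,i)$ at some time, which is precisely the statement that $i$ is an affected index.

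The hard part will be the resource accounting in the second step, not the geometry. The delicate issue is that a rescuing $2$-rook does not simply delete a $0$-rook: after removing it the rook must leave, and in leaving it may deposit a fresh $0$-rook on a new cell, merely relocating the obstruction. The argument therefore has to treat $0$-rooks as a quantity conserved along column $i$ and row $2j-1$, and show that refusing the ``cheap'' option --- letting $g$ (or a descendant $1$-rook) capture the non-terminal on $(2m+1,i)$ and thereby create the red rook directly --- necessarily strands at least one $0$-rook that no remaining full $2$-rook can reach, contradicting the fact that the board is cleared. Making this accounting airtight, while correctly reusing the bounds furnished by the two earlier claims, is where essentially all the work lies.
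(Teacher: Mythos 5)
There is a genuine gap here. Your setup is fine --- the observation that the collector $g$'s first action is a capture landing it as a $1$-rook on $(2j-1,i)$, and the ``vacating principle'' (a non-final cell that ever holds a $0$-rook can only be emptied by a rook that arrives with two captures still available) are both correct and both used by the paper. But the heart of your argument, the claim that ``the supply is too small, and a contradiction follows,'' is never derived; you explicitly defer it (``making this accounting airtight \ldots is where essentially all the work lies''), and it is not a routine bookkeeping exercise. A purely global capacity count (total capacity exceeds the $R-1$ captures by $m+k+1$) cannot by itself establish a statement about one particular column $i$: the slack is global, while the obstruction you need is local. To localize it you would have to prove which $2$-rooks can ever reach which cells, and at that point you are no longer counting --- you are doing the paper's argument.

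The paper avoids contradiction and counting entirely. It \emph{chases} the $1$-rook created at $(2j-1,i)$: that rook cannot be captured (the resulting $0$-rook at $(2j-1,i)$ could only be cleared by a $2$-rook reaching that cell, and no $2$-rook can --- the row's own collector has spent a move, other collectors are confined to their rows, and a cleaner must spend its first capture inside the top row before it can enter column $i$); so it must capture, and its only useful captures are down column $i$, landing either on $(2m+1,i)$ (which immediately makes $i$ affected) or on $(2j'-1,i)$ for another $j'$, where the collector of row $2j'-1$ is forced to rescue the $0$-rook and recreate a $1$-rook at $(2j'-1,i)$, so the argument repeats with a strictly shrinking set of available rows and must terminate at the top row. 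That reachability fact --- no $2$-rook can ever service a cell $(2j-1,i)$ with $i\le n$ --- is the missing ingredient in your proposal; once you have it, the chasing argument is forced and the counting becomes unnecessary.
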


\begin{proof}
Assume that $i\in[n]$ is a $j$-affected index for some $j\in[m]$. That is, there exists $t\in[s]$ such that in the $t^{th}$ move of $\sigma$, the collector $2$-rook at the cell $(2j-1,n+j)$ captures the rook at the cell $(2j-1,i)$. After the $t^{th}$ move of $\sigma$, the cell $(2j-1,i)$ has a $1$-rook. Let $t'>t$ denote the first move of $\sigma$ wherein the $1$-rook  at the cell $(2j-1,i)$ either gets captured (Case 1) or captures (Case 2).

In Case 1, a $1$-rook captures the $1$-rook at the cell $(2j-1,i)$ in the $t'^{th}$ move of $\sigma$. After the $t'^{th}$ move of $\sigma$, the cell $(2j-1,i)$ has a 0-rook. Let $p>t'$ denote the first move of $\sigma$ that empties the cell $(2j-1,i)$. Note that the cell $(2j-1,i)$ has a $1$-rook before the $p^{th}$ move of $\sigma$. So, there exists $t'<q<p$ such that a $2$-rook captures $0$-rook at the cell $(2j-1,i)$ in the $q^{th}$ move of $\sigma$. However, no such $2$-rook exists on the board. Thus, Case 1 does not arise.
\\\\
In Case 2, in the $t'^{th}$ move of $\sigma$, the $1$-rook at the cell $(2j-1,i)$ captures either rook at the cell $(2m+1,i)$ (Subcase 1), or rook at the cell $(2j'-1,i)$ for some $j'\in [m]\setminus\{j\}$ (Subcase 2). \\\\
In Subcase 1, the cell $(2m+1,i)$ has a $0$-rook after $t'$ moves of $\sigma$. So, $i$ is an affected index. \\\\
In Subcase 2, the cell $(2j'-1,i)$ has a $0$-rook after $t'$ moves of $\sigma$. Let $p'>t'$ denote the first move of $\sigma$ that empties the cell $(2j'-1,i)$. Note that the cell $(2j'-1,i)$ has a 1-rook before the $p'^{th}$ move of $\sigma$. So, there exists $t'<q'<p'$ such that a $2$-rook captures $0$-rook at the cell $(2j'-1,i)$ in the $q'^{th}$ move of $\sigma$. Note that this $2$-rook is the collector rook at the cell $(2j'-1,n+j')$. After the $q'^{th}$ move of $\sigma$, the cell $(2j'-1,i)$ has a $1$-rook. Let $t''>q$ denote the first move of $\sigma$ wherein the $1$-rook at the cell $(2j'-1,i)$ either gets captured or captures. As before, it can be argued that in the $t''^{th}$ move, the $1$-rook at the cell $(2j'-1,i)$ is not captured, and it either captures rook at the cell $(2m+1,i)$ (in which case we are done), or rook at the cell $(2j''-1,i)$ for some $j''\in[m]\setminus\{j,j'\}$ (in which case the collector $2$-rook at the cell $(2j''-1,n+j'')$ captures the $0$-rook at the cell $(2j''-1,i)$ in some subsequent move). Repeatedly using the same argument proves the claim. 
\end{proof}

Consider $S := \{\ell ~|~ \ell \in [n] \text{ and } \ell \text{ is an affected index}\}$. We claim that $S$ is a dominating set in $G$. Indeed, consider any non-terminal vertex $v_j \in B$. If $i$ is the unique $j$-affected index, then $i$ is also an affected index, and therefore belongs to the dominating set. Note that $i \in N(v_j)$ by construction, therefore we are done. Also, by Claim 6, we have that $|S| \leq k$. This concludes the proof in the reverse direction. 
\end{proof}

\subsection{Queens}
\label{sec:queens}

Recall the reduction described for the proof of~\Cref{thm:rooks}. It is straightforward to check that if we introduce a large number --- say $O(n^2)$ many --- empty columns between every pair of consecutive columns of the original board, then we can also replace the rooks by queens and the reduction will remain valid. This is because the vast empty spaces essentially ``nullify'' the additional diagonal moves of the queens, thereby reducing their behavior to being equivalent to rooks. Also note that the operation of adding empty columns does not affect the forward direction: all pairs of mutually attacking locations remain mutually attacking even after this modification.

We now present the following strengthening of this hardness result. Recall that in the previous reduction, we had pieces that were allowed to capture twice and others that were allowed to capture once. With queens, however, we can adapt the reduction so that every piece is allowed to capture twice, bringing this closer to the spirit of traditional solo chess:

\queens*

We note that this result can be achieved by replacing every queen that is allowed to capture once with the following pair of queens that are both allowed to capture twice, with the queen on the bottom right replacing the ``original'' $1$-queen:

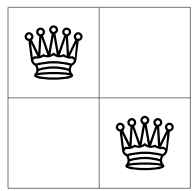
\begin{figure}[H]\centering
\centering
    $$~$$
    \begin{tikzpicture}
    
        
        \draw[step=1.2cm,color=gray] (0,0) grid (2.4,2.4);

        \node at (1.8,0.6) {{\Huge{$\symqueen$}}};
        \node at (0.6,1.8) {{\Huge{$\symqueen$}}};
        
\end{tikzpicture}
    \caption{The reduced instance of~\textsc{Generalized Solo Chess} played by rooks corresponding to the instance shown in~\Cref{fig:rbds}.}
    \label{fig:example1rednqueens}
\end{figure}


We call the queen on the top-left corner the supporting queen, and refer to the other queen as its partner. Once all the $1$-queens of the reduced instance are replaced in this way, we ensure that all supporting queens have the property that they do not attack any queen other than their partner. To achieve this, we shift them north-west along their diagonals appropriately if required. Note that the fact that the supporting queens attack only their partners forces that they are never captured by another piece, and that they capture their partner queen, which replaces the partner with a $1$-queen, as desired. 
We omit the details here.

\subsection{Pawns}
\label{sec:pawns}

In contrast to the cases of Rooks, Queens and Bishops, we show that \textsc{Generalized Solo Chess}(\sympawn,2) can be decided by an algorithm whose running time is linear in the number of pawns when all pawns are allowed to capture at most twice in the initial configuration.

\pawns*

We denote by $V$ the set of squares that initially contain a pawn, and by $t$ the location of the target pawn.
For $u,v \in V$, we say that $u$ is a {\it parent} of $v$ (and that $v$ is a {\it child} of $u$) if $u$ is diagonally one capture move away from $v$.  We denote by $C(v)$ the children of $v$.
Further if two vertices share a common parent, then we call them 
{\it siblings} of each other. We say that an initial configuration of pawns is \emph{super-solvable} if the final capturing pawn has one move remaining after the final capture.

\begin{definition}
We say that a configuration of \textsc{Generalized Solo Chess}($\sympawn,2)$ with position set $V$ is a skewed binary tree rooted at square $v$ if 
the following are true:
\begin{itemize}
\item[(a)] All pawns are on squares of the same color.
\item[(b)] All squares in $V \setminus \{v\}$ are below $v$.
\item[(c)] Every square in $V \setminus \{v\}$ has a parent in $V$.
\item[(d)] Every non-empty row below $v$ contains exactly two squares of $V$ with a common parent, except possibly the last (bottom-most) row which may contain one square of $V$.
\end{itemize}
\end{definition}
 
The following result is the key to the characterization of solvable instances.
\begin{lemma}\label{lem:pawns-1-solvable}
An instance of \textsc{Generalized Solo Chess}($\sympawn$,2) is super-solvable if and only if the initial configuration is a skewed binary tree.
\end{lemma}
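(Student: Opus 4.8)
The plan is to prove both directions by induction on the number of pawns $n$, peeling one ``level'' off the top of the configuration at each step. Throughout I rely on two elementary invariants coming from the fact that a (white) pawn at $(r,c)$ may capture only a pawn at $(r+1,c-1)$ or $(r+1,c+1)$: first, the parity of $r+c$ is preserved under captures, so every pawn forever keeps the colour of its starting square; and second, a pawn's row strictly increases whenever it captures, so the \emph{parent} of a square lies exactly one row above it and a pawn only ever moves upward. I write $v$ for the square on which the game ends.

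\textbf{Forward direction} (skewed binary tree $\Rightarrow$ super-solvable). I would induct on $n$, the single-pawn configuration being the (vacuous) base. For the step, let $T$ be a skewed binary tree rooted at $v$ with $n\ge 2$. By (d) the row immediately below $v$ contains exactly its two children; one of them, say $s$, is the common parent of the next non-empty row, while the other, $\ell$, is a leaf (if $T$ has only two rows then both children are leaves and the subtree rooted at $s$ is just $\{s\}$). The squares lying at or below $s$ form a skewed binary tree rooted at $s$, so by the induction hypothesis it is super-solvable; play its clearing to reach a single pawn $G$ on $s$ that still has a capture to spare. Now let $G$ capture $v$, and finally let the untouched leaf $\ell$, which still has both of its captures, take the pawn now standing on $v$. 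The survivor sits on $v$ with exactly one capture remaining, so $T$ is super-solvable. Choosing the fresh leaf as the \emph{last} capturer is what guarantees the one-capture surplus.

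\textbf{Reverse direction} (super-solvable $\Rightarrow$ skewed binary tree). This is the substantial direction; fix a clearing sequence $\sigma$ witnessing super-solvability and induct on $n$. Condition (a) is immediate from the colour invariant. Let $r^\star$ be the highest occupied row. A pawn in row $r^\star$ can never capture (this would carry it above $r^\star$) and can only be replaced, one for one, by a capturer arriving from row $r^\star-1$; hence the number of pawns in row $r^\star$ is invariant, and since the board is cleared to one pawn this number is exactly $1$. That unique square is the surviving square $v$, and every other square lies strictly below it, giving (b). The final capturer $F$ is the survivor and, by super-solvability, finishes with one capture left; starting from a budget of two, it therefore makes exactly one capture in total, namely the last move into $v$. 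Thus $F$ never moves earlier and begins on a child-square $(r^\star-1,c_v\pm1)$ of $v$, so $F$ is a leaf.

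It remains to decompose the rest of the board. The top pawn $v$ never captures, so it is captured exactly once, by a pawn $G$ arriving from a child-square of $v$; since $F$'s single capture is on $G$ rather than on $v$, this capturer is distinct from $F$, and as $F$ occupies its own child-square throughout, $G$ must arrive from the opposite child-square $s:=(r^\star-1,c_v\mp1)$. Deleting from $\sigma$ its two captures involving $v$ or $F$ (namely ``$G$ captures $v$'' and ``$F$ captures $G$'') leaves $n-3$ captures, all internal to $V':=V\setminus\{v,F\}$, which clear $V'$ down to the single pawn $G$ resting on $s$. Because $G$ must still capture $v$ afterwards, it spent at most one capture inside $V'$; so if $|V'|\ge 2$ it spent exactly one and ends on $s$ with one to spare, making $V'$ a super-solvable configuration with survivor $s$, whence the induction hypothesis renders it a skewed binary tree rooted at $s$ (the case $|V'|=1$ is trivial). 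Reapplying the row-$r^\star$ argument to $V'$ shows its top square $s$ is the unique pawn in row $r^\star-1$ of $V'$, so in the full configuration row $r^\star-1$ consists of exactly the siblings $\{F,s\}$, both children of $v$; together with the inherited structure of $V'$ this yields (c) and (d), completing the induction.

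The main obstacle is precisely this last decomposition in the reverse direction: showing that the entire board, after deleting the root $v$ and the leaf survivor $F$, collapses onto the single opposite child-square $s$ and is itself super-solvable. This is what both licenses the inductive call and forces the rigid ``exactly two squares per row'' shape of condition (d); by contrast, the forward direction and conditions (a)--(b) are routine once the upward-capture invariants are in hand.
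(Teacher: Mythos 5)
Your proof is correct and follows essentially the same route as the paper's: induction on $|V|$, peeling off the root together with the untouched child that makes the final capture and recursing on the remainder rooted at the other child (your $F$, $s$, $G$ play the roles of the paper's $u$, $v$, and the token arriving at $t$), with the forward direction built by appending the two final captures to a super-solvable sequence for the truncated tree. The only point to patch is the base case $n=2$, where your argument that the capturer $G$ of the original root pawn is distinct from the final capturer $F$ does not apply (there $G=F$); the paper treats $|V|=2$ as a separate base case.
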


\begin{figure}\centering
\scalebox{0.67}[0.67]{\includegraphics{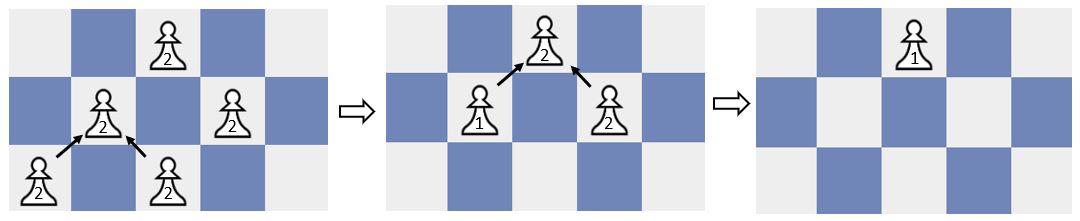}}
\caption{The initial configuration in this example is a skewed binary tree. Note that it is super-solvable because the shown sequence of moves clears the board such that the final pawn has one move left - here, the $2$-pawn at the cell $(2,4)$ does the final capture and becomes a $1$-pawn.}
\end{figure}

\begin{figure}\centering
\scalebox{0.67}[0.67]{\includegraphics{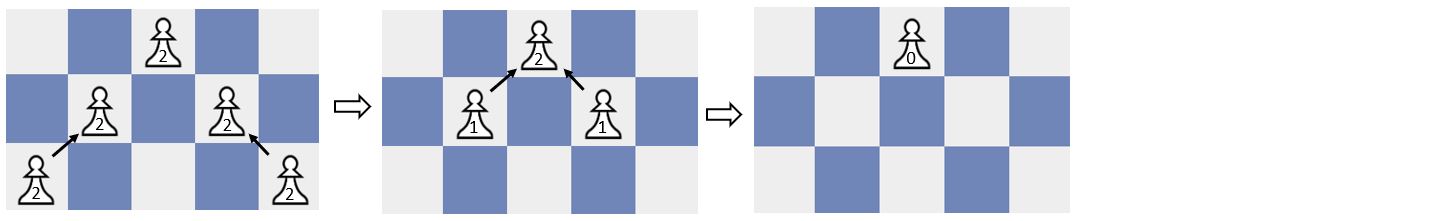}}
\caption{The initial configuration in this example is not a skewed binary tree. Note that it is not super-solvable because any sequence of moves that clears the board (one such sequence is shown) is such that the final pawn has no moves left.}
\end{figure}

In particular, we can verify in linear time whether a given configuration of \textsc{Generalized Solo Chess}($\sympawn$,2) is super-solvable, as each of the properties (a)-(d) can be checked in linear time. 

We first observe that pawns can capture only in the forward direction (upward for W pawns)  and only pawns on squares of the same color.
Thus, we shall henceforth assume that all pawns are on squares of the same color and also that there is exactly one pawn whose initial square has the largest $y$ co-ordinate; we shall call this the target pawn. If our assumption is false, we report the instance as a NO instance, and do not proceed further. We now describe the proof of~\Cref{lem:pawns-1-solvable}.

\begin{proof}
We prove the claim by induction on $|V|$.
If $|V|=1$, the instance is trivially super-solvable and also satisfies the definition of a skewed binary tree.
If $|V|=2$, then the instance is super-solvable if and only if the unique vertex $v \in V \setminus \{t\}$ is a child of $t$, and this configuration is a skewed binary tree.

Thus, we suppose that $|V| \geq 3$.
The necessity of conditions (a), (b), (c) has already been noted so that in the rest of this section we consider only configurations that satisfy (a), (b) and (c).
We shall now establish the necessity of condition (d).

Firstly, we claim that $t$ has two children.
Suppose not, and let $v$ be the only child of $t$. Then the last capture must be from $v$ to $t$, and the last but one capture must be at $v$, so that the token at $v$ has only one move remaining. When this token captures at $t$, it has zero moves left after the capture.

Thus, we can assume that $t$ has two children $u,v$.
Consider a valid super-solvable sequence $\sigma$; let $u$ be the vertex from which the final capture was made at $t$. Then the token at $u$ must have had two moves left before this capture and therefore no capture in $\sigma$ was ever made at $u$.
Also, the last but one capture in $\sigma$ must have been made from $v$ to $t$.
This implies that the sequence obtained from $\sigma$ by excluding the final capture is a valid super-solvable sequence for $V \setminus \{t,u\}$. Since $V \setminus \{t,u\}$ is super-solvable, by the induction hypothesis, property (d) holds; i.e. there are exactly two squares in every row below $v$, except possibly for the bottom-most non-empty row.
This shows that property (d) holds for $V$ as well.

For the other direction, suppose that a given configuration with $V$ as the set of squares is a skewed binary tree, and that $|V| \geq 3$.
Then by definition $t$ has two children $u,v$ and it must be the case that one of $u,v$, say $u$ has no child outside $C(v)$.
Then $V \setminus \{u,v\}$ must induce a skewed binary tree; let $\sigma$ be a super-solvable sequence for $V \setminus \{u,v\}$. Appending the captures $u \to t,v\to t$ yields a super-solvable sequence for the original configuration.

This completes the proof of Lemma \ref{lem:pawns-1-solvable}.
\end{proof}

We now proceed to the proof of Theorem \ref{thm-pawn-win}.
\begin{proof}
Let $V$ be the initial position set and $t$ be the target square.

{\bf Case 1:} $t$ has a single child $x$.
In this case, we note that the instance is solvable if and only if the configuration restricted to $V \setminus \{t\}$ with $x$ as target is super-solvable, which by Lemma \ref{lem:pawns-1-solvable} in linear time.

{\bf Case 2:} 
$t$ has two children $x,y$, and one of them, say $y$, has no child other than the common child of $x,y$.
In this case, the instance is solvable if and only if the configuration restricted to $V \setminus \{t,y\}$ with target $x$ is super-solvable, which we can verify in linear time.

{\bf Case 3:}
$t$ has two children $x,y$, and 
$|C(x) \cup C(y)|=3$; let $C(x) \cup C(y)=\{a,b,c\}$. Then the instance is solvable if and only if there's a re-labeling $u,v,w$ of $\{a,b,c\}$ such that $C(u) \cup C(v) \subseteq C(w)$ and the configuration restricted to $V \setminus \{t,x,y,u,v\}$ with target $w$ is super-solvable. This can again be verified in linear time.

This completes the proof of Theorem \ref{thm-pawn-win}.
\end{proof}



\section{Graph Capture Game}
\label{sec:graphs}
We introduce a game on graphs, which generalizes~\textsc{Generalized Solo Chess}($\symknight$,$d$) when played on undirected graphs and~\textsc{Generalized Solo Chess}($\sympawn$,$d$) when played on directed graphs.

\begin{tcolorbox}
\textsc{Graph Capture}(G,d):
\tcblower{}
{\bf Input:} A graph $G=(V,E)$.

{\bf Output:} Decide if there exists a sequence of token captures (along the edges of $G$) such that only a single token remains, with the constraint that each token may capture at most $d$ times.
\end{tcolorbox}

Our main result in this section is the following:

\graphcapture*

In the rest of this section, we say that $G$ is solvable if \textsc{Graph Capture}$(G,2)$ is a \textsc{Yes}-instance.

\subsection{Undirected Graphs}
\label{sec:gc-undirected}
We prove Theorem \ref{thm:graphcapture} for undirected graphs.

A rooted tree is a pair $(T,v)$, with $v$ denoting the root vertex; given a rooted tree $(T,v)$ and a vertex $w$ of $T$, we denote by $C(w)$ the children of $w$, 
and by $T(w)$ the subtree rooted at $w$.

\begin{lemma}\label{spanningTreeLemma}
A graph $G=(V,E)$ is solvable if and only $G$ contains a vertex $v$ and a spanning tree $T$ such that every internal node of the rooted tree $(T,v)$ has a leaf neighbor.
\end{lemma}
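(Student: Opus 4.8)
The plan is to prove both directions by induction on $|V|$, using the operational semantics of the capture game. The key observation is that a token capturing along an edge to a neighbor is precisely the operation of deleting a leaf in a rooted spanning tree, provided we are careful about the capture-count bookkeeping. Throughout, I would maintain the invariant that the last surviving token corresponds to the root $v$.

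\textbf{Forward direction (solvable $\Rightarrow$ spanning tree exists).} Suppose $G$ is solvable, and let $\sigma$ be a valid sequence of captures clearing all but one token, which ends up at some vertex $v$. I would read off a spanning tree from $\sigma$ as follows: whenever a token at $u$ captures the token at $w$ along edge $uw$, I record $w$ as a child of $u$ in a tree rooted at $v$. The first thing to verify is that this yields a genuine spanning tree $T$ rooted at $v$: each captured token is recorded exactly once as a child (it is captured exactly once, when it is removed), $v$ itself is never captured, and the ``parent'' relation traces back to $v$ since every token that makes a capture is itself eventually captured or is the survivor. The structural heart of the argument is establishing that every internal node has a leaf neighbor. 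For an internal node $w$, consider the \emph{first} capture that $w$'s token makes; the victim of that first capture is a child $x$ of $w$ that itself never makes any capture (because $w$'s token is the one that captured $x$, so $x$ was captured before it could act$\ldots$ one must argue $x$ made no capture before being taken, which follows because a token can only be captured while stationary and $x$ is removed at that capture). Such an $x$ is a leaf of $T$ adjacent to $w$, giving the required leaf neighbor.

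\textbf{Reverse direction (spanning tree exists $\Rightarrow$ solvable).} Given a rooted spanning tree $(T,v)$ in which every internal node has a leaf neighbor, I would produce a clearing sequence by induction. The natural strategy is a bottom-up leaf-elimination: repeatedly pick a leaf whose removal keeps the ``every internal node has a leaf neighbor'' property intact, and have the appropriate token capture it along a tree edge. The delicate point is the capture-budget of $2$: each token may capture at most twice, so when an internal node $w$ absorbs its children one at a time, $w$'s token would exceed the budget if it had more than two children. I would therefore organize the elimination so that each token performs at most two captures. Concretely, I would induct on $|V|$: using the leaf-neighbor property at the root $v$, let $\ell$ be a leaf neighbor of $v$; first clear the subtree $T(w)$ of one child $w$ of $v$ using the induction hypothesis (leaving one token at $w$), then have $w$ capture into $v$ and $\ell$ capture into $v$, or reorganize so no token exceeds two captures. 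The matching with the pawn/tree analysis earlier (the skewed-binary-tree characterization) suggests the right bookkeeping is to ensure each node captures its children in a paired-up fashion.

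\textbf{Main obstacle.} I expect the genuine difficulty to lie in the capture-count constraint of $d = 2$ in the reverse direction: simply having a spanning tree with the leaf-neighbor property does not immediately respect the ``at most two captures per token'' rule, since an internal node may have many children. The leaf-neighbor condition must be leveraged exactly to route captures so that the budget is never violated — intuitively, the leaf neighbor of each internal node provides a ``fresh'' token that can absorb a capture, allowing the internal node's token to be passed upward after only one or two of its own captures. Making this routing precise, and proving by induction that the leaf-neighbor property is preserved (or can be re-established) after each elimination step, is where the real work sits; the forward direction, by contrast, is mostly careful bookkeeping about when each token acts.
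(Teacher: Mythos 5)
Your overall plan (read a spanning tree off the capture sequence; induct for the converse) is the right shape, but both directions have genuine gaps. In the forward direction, the object you build is not the required spanning tree, and the leaf-identification step is false. Recording ``$w$ as a child of $u$ whenever a token at $u$ captures the token at $w$'' gives a vertex a new parent every time a capture lands on it, so it is not a tree on $V(G)$; and if you instead mean the tree on \emph{tokens} (as your parenthetical ``each captured token is recorded exactly once'' suggests), that tree need not even embed in $G$. On $K_{1,3}$ with center $c$ and leaves $a_1,a_2,a_3$, the sequence $a_1\to c$, $a_2\to c$, $a_3\to c$ is valid, yet the token tree is a path on four tokens (not a subgraph of $K_{1,3}$), and the token originally at $a_2$ is a non-root internal node whose unique, first victim (the token from $a_1$) \emph{did} make a capture before being taken --- refuting your claim that a token's first victim never acts. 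The correct extraction roots at the final vertex $v$ and sets $\mathrm{parent}(x)$ to be the vertex to which the last token occupying $x$ departs; the leaf child of an internal $w$ is then found from the \emph{last} capture made at $w$: the capturer, coming from some $u$, must still leave $w$ afterwards, so the budget of $2$ forces it to have started life at $u$ and forces $u$ to have received no captures, i.e.\ $u$ is a leaf. Note that this is where $d=2$ actually enters (your argument never uses it), and that the root must be exempted from the leaf-neighbor requirement (clear $P_3$ along the path and root at the endpoint where play ends). The paper sidesteps the direct argument entirely, instead proving a recursive characterization via the notions of $0$-solvable and $1$-solvable rooted subtrees and inducting on $|V(T)|$.

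In the reverse direction you have misidentified the obstacle. A node with many children is not a budget problem: the children's tokens capture \emph{into} the parent one after another, each removing its predecessor, so no token exceeds two captures regardless of degree (the parent's own token simply waits to be captured). The real issue is that after a subtree $T(w)$ is cleared to a single token at $w$, that token must have a move \emph{to spare} in order to ascend to $w$'s parent, and this is exactly what the leaf child supplies: schedule the leaf's capture into $w$ last, so the surviving token at $w$ is a fresh one with one move remaining. Formalizing this requires the strengthened induction hypothesis the paper calls $1$-solvability --- $(T(w),w)$ is $1$-solvable iff $w$ has a leaf child and every child's subtree is $1$-solvable, while $(T,v)$ is $0$-solvable iff every child's subtree is $1$-solvable --- and your ``or reorganize so no token exceeds two captures'' is precisely the invariant left unformulated.
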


We prove Lemma \ref{spanningTreeLemma} in the Appendix and now turn to a proof of part (a) in~\Cref{thm:graphcapture}.

\begin{proof}
We proceed by a reduction from \textsc{Colorful Red-Blue Dominating Set}. Let $\langle G = (R \uplus B,E); k\rangle$ be an instance of \textsc{Colorful Red-Blue Dominating Set} with color classes $V_1 \cup \cdots \cup V_k$. We assume, without loss of generality, that $|V_1| = \cdots = |V_k| = n$ and let $V_j := \{v_1^{(j)}, \ldots, v_n^{(j)}\}$. We begin by describing the construction of the reduced instance. We begin with the graph $G$ and make the following additions:

\begin{enumerate}
    \item For all $i \in [n]$ and $j \in [k]$, introduce a vertex $u_i^{(j)}$ and make it adjacent to $v_i^{(j)}$. We call these the \emph{red partner vertices}.
    \item For each $u_i^{(j)}$, introduce two neighbors $p_i^{(j)}$ and $q_i^{(j)}$, and finally, introduce two vertices $r_i^{(j)}$ and $s_i^{(j)}$ that are adjacent only to $p_i^{(j)}$ and $q_i^{(j)}$ respectively. In other words, each $u_i^{(j)}$ has two degree two neighbors, which in turn have a leaf neighbor each. Combined, we refer to the collection of vertices $S_j := \{u_i^{(j)},p_i^{(j)},q_i^{(j)},r_i^{(j)},s_i^{(j)}~|~ i \in [n]\}$ as the \emph{selection gadget} for $V_j$. 
    
    \item For all $j \in [k]$, introduce a vertex $w_j$ and make it adjacent to $u_i^{(j)}$ for all $i \in [n]$. We call these vertices the \emph{guards}.
    \item We finally add a vertex $\star$ that is adjacent to all the red partner vertices. We also add the vertices $p$ and $q$ and the edges $(\star,p)$ and $(p,q)$.
\end{enumerate}

We let $H$ denote the graph thus constructed based on $G$ and ask if $H$ has a rooted spanning tree for which every internal note has a leaf neighbor. This completes a description of the construction. We briefly describe the intuition for the equivalence of the two instances. Because of the vertices selection gadgets, the partner vertices are forced to find their leaf neighbors in any spanning tree among the red vertices that they partner --- except for at most one, which can use the guard vertex as the leaf neighbor. This leads to one red vertex being left ``free'' of being a leaf neighbor to a partner vertex in each color class, hence we have a selection of one blue vertex per color class. Since these are the only possible entry points for the blue vertices into the spanning tree, the vertices ``chosen'' by the selection gadget must correspond to a dominating set. We now formalize this intuition.

\begin{figure}
\centering
    \includegraphics[scale=0.42]{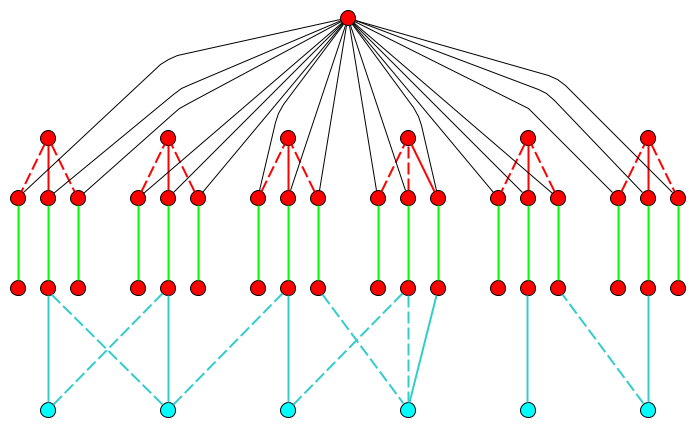}
    \caption{An illustration of the reduction from Red-Blue dominating set. The solid lines belong to the spanning tree. The ``spikes'' from the selection gadget are omitted for clarity.}
    \label{fig:graphGadget1}
\end{figure}

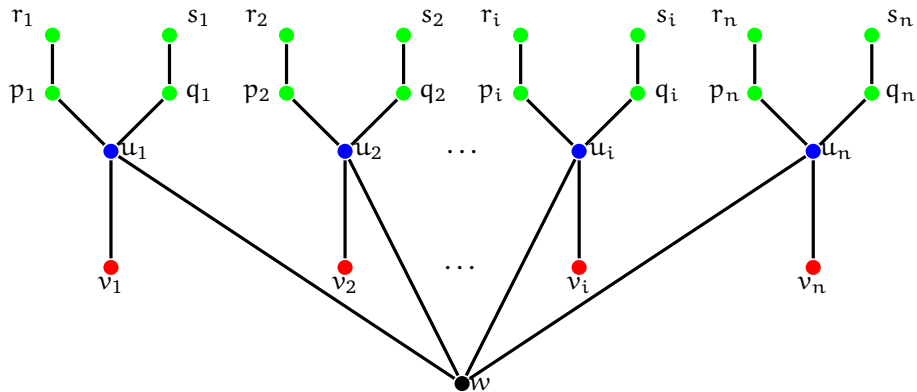
\begin{figure}
\centering
\begin{tikzpicture}[scale=.77,auto=center,every node/.style={circle,fill=black,inner sep=2pt},every path/.style={very thick}]
 

\node(w) at (8,0){};

\node(v1)[red] at (2,2){};
\node(v2)[red] at (6,2){};
\node(vi)[red] at (10,2){};
\node(vn)[red] at (14,2){};

\node(u1)[blue] at (2,4){};
\node(u2)[blue] at (6,4){};
\node(ui)[blue] at (10,4){};
\node(un)[blue] at (14,4){};

\node[fill=none] at (8.3,0) {$w$};

\node[fill=none] at (2,1.7){$v_1$};
\node[fill=none] at (6,1.7){$v_2$};
\node[fill=none] at (10,1.7){$v_i$};
\node[fill=none] at (14,1.7){$v_n$};

\node[fill=none] at (2.4,4){$u_1$};
\node[fill=none] at (6.4,4){$u_2$};
\node[fill=none] at (10.4,4){$u_i$};
\node[fill=none] at (14.4,4){$u_n$};

\draw[-](v1)--(u1);
\draw[-](v2)--(u2);
\draw[-](vi)--(ui);
\draw[-](vn)--(un);

\draw[-](w)--(u1);
\draw[-](w)--(u2);
\draw[-](w)--(ui);
\draw[-](w)--(un);

\node(p1)[green] at (1,5){};
\node(q1)[green] at (3,5){};
\draw[-](u1)--(p1);
\draw[-](u1)--(q1);
\node(r1)[green] at (1,6){};
\node(s1)[green] at (3,6){};
\draw[-](p1)--(r1);
\draw[-](q1)--(s1);

\node[fill=none] at (0.5,5){$p_1$};
\node[fill=none] at (3.5,5){$q_1$};
\node[fill=none] at (0.5,6.3){$r_1$};
\node[fill=none] at (3.5,6.3){$s_1$};

\node(p2)[green] at (5,5){};
\node(q2)[green] at (7,5){};
\draw[-](u2)--(p2);
\draw[-](u2)--(q2);
\node(r2)[green] at (5,6){};
\node(s2)[green] at (7,6){};
\draw[-](p2)--(r2);
\draw[-](q2)--(s2);

\node[fill=none] at (4.5,5){$p_2$};
\node[fill=none] at (7.5,5){$q_2$};
\node[fill=none] at (4.5,6.3){$r_2$};
\node[fill=none] at (7.5,6.3){$s_2$};

\node(pi)[green] at (9,5){};
\node(qi)[green] at (11,5){};
\draw[-](ui)--(pi);
\draw[-](ui)--(qi);
\node(ri)[green] at (9,6){};
\node(si)[green] at (11,6){};
\draw[-](pi)--(ri);
\draw[-](qi)--(si);

\node[fill=none] at (8.5,5){$p_i$};
\node[fill=none] at (11.5,5){$q_i$};
\node[fill=none] at (8.5,6.3){$r_i$};
\node[fill=none] at (11.5,6.3){$s_i$};

\node(pn)[green] at (13,5){};
\node(qn)[green] at (15,5){};
\draw[-](un)--(pn);
\draw[-](un)--(qn);
\node(rn)[green] at (13,6){};
\node(sn)[green] at (15,6){};
\draw[-](pn)--(rn);
\draw[-](qn)--(sn);

\node[fill=none] at (12.5,5){$p_n$};
\node[fill=none] at (15.5,5){$q_n$};
\node[fill=none] at (12.5,6.3){$r_n$};
\node[fill=none] at (15.5,6.3){$s_n$};

\node[fill=none] at (8,2){$\dots$};
\node[fill=none] at (8,4){$\ldots$};

\end{tikzpicture}
\caption{A schematic showing the selection gadget for one color class.}
    \label{fig:my_label}
\end{figure}

\paragraph*{Forward Direction}

Assume that $\langle G = (R \uplus B,E); k\rangle$ is a YES instance. That is, there exist $1\leq j_1,\ldots,j_k\leq n$ such that every vertex in $B$ has a neighbour in $\big\{v_{j_1}^{(1)},\ldots,v_{j_{k}}^{(k)}\big\}$. For each $b\in B$, let $f(b)$ denote an arbitrary but fixed $\ell\in[k]$ such that $v_{j_{\ell}}^{(\ell)}\in N(b)$.\\\\Let $T$ denote the spanning tree of $H$ with the following edge set:
\begin{itemize}
    \item $T$ contains all the edges of $H\Big[\{\star,p,q\}\cup R\cup \underset{1\leq\ell\leq k}{\bigcup} S_{\ell}\Big]$
    \item For every $b\in B$, $T$ contains the edge $\big\{v_{j_{f(b)}}^{(f(b))},b\big\}$
    \item For every $1\leq \ell\leq k$, $T$ contains the edge $\big\{u_{j_{\ell}}^{(\ell)},w_{\ell}\big\}$
\end{itemize}
Note that every internal node of the rooted tree $(T,\star)$ has a leaf neighbour.
\paragraph*{Reverse Direction}
Assume that there exist $r\in V(H)$ and a spanning tree (say $T$) of $H$ such that every internal node of the rooted tree $(T,r)$ has a leaf neighbour. In $T$, $\star$ is adjacent to $p$ and at least one red partner vertex. So, $\star$ has at least two neighbours in $T$. Thus, $\star$ is not a leaf node of $(T,r)$.\\\\The only neighbours of $\star$ in $T$ are $p$ and some vertices of $\{u_i^{(\ell)}~|~1\leq\ell\leq k,1\leq i\leq n\}$. Note that $p$ is not a leaf node of $(T,r)$ as $p$ has two neighbours, i.e., $\star$ and $q$, in $T$. Also, for every $1\leq \ell\leq k$ and every $1\leq i\leq n$, $u_i^{(\ell)}$ is not a leaf node of $(T,r)$ because $u_{i}^{(\ell)}$ has at least two neighbours, i.e., $p_i^{(\ell)}$ and $q_i^{(\ell)}$, in $T$. Therefore, $\star$ has no leaf neighbours in $(T,r)$. So, $\star$ is not an internal node of $(T,r)$. Hence, we have $r=\star$.\\\\
Let $1\leq \ell\leq k$. The only neighbours of $w_{\ell}$ in $T$ are some vertices of $\{u_{i}^{\ell}~|~1\leq i\leq n\}$. As argued earlier, no red partner vertex is a leaf node of $(T,r)$. So, $w_{\ell}$ has no leaf neighbours in $(T,r)$. Thus, $w_{\ell}$ is not an internal node of $(T,r)$. That is, $w_{\ell}$ is a leaf node of $(T,r)$. Hence, there exists a unique integer (say $g(\ell)$) in $[n]$ such that $u_{g(\ell)}^{(\ell)}$ is the neighbour of $w_{\ell}$ in $T$.\\\\
Now, it suffices to show that every vertex in $B$ has a neighbour in $\big\{v_{g(1)}^{(1)},\ldots,v_{g(k)}^{(k)}\big\}$. Let $b\in B$. There exist $1\leq\ell\leq k$ and $1\leq i\leq n$ such that $v_{i}^{(\ell)}\in N_T(b)$. As shown above, $u_{i}^{(\ell)}$ is not a leaf node of $(T,r)$. That is, $u_{i}^{(\ell)}$ is an internal node of $(T,r)$. So, $u_{i}^{(\ell)}$ has a leaf neighbour (say $z$) in $(T,r)$. No vertex of $V(T)\setminus\big\{p_i^{(\ell)}, q_i^{(\ell)}, \star, w_{\ell},v_{i}^{(\ell)}\big\}$ is adjacent to $u_{i}^{(\ell)}$ in $T$.
Note that 
\begin{itemize}
    \item $p_{i}^{(\ell)}$ is not a leaf node of $(T,r)$ as $p_i^{(\ell)}$ has at least two neighbours, i.e., $u_i^{(\ell)}$ and $r_{i}^{(\ell)}$, in $T$.
    \item $q_{i}^{(\ell)}$ is not a leaf node of $(T,r)$ as $q_i^{(\ell)}$ has at least two neighbours, i.e., $u_i^{(\ell)}$ and $s_{i}^{(\ell)}$, in $T$.
    \item If $v_i^{(\ell)}$ is a neighbour of $u_i^{(\ell)}$ in $T$, then $v_{i}^{(\ell)}$ is not a leaf node of $(T,r)$ because in such a case, $v_i^{(\ell)}$ has at least two neighbours, i.e., $b$ and $u_i^{(\ell)}$, in $T$.
\end{itemize}
Therefore, we have $z=w_{\ell}$ and hence, $i=g(\ell)$. 

This completes the argument for equivalence. 
\end{proof}

\subsection{Directed Acyclic Graphs}
\label{sec:gc-dags}
We now prove Theorem \ref{thm:graphcapture} for DAGs.

\begin{proof}
Let $\varphi$ be a given instance of 3-SAT, with clauses $C_1,C_2,\ldots,C_m$ over variables $x_1,x_2,\ldots,x_n$.

We construct the following directed graph $G=(W,E)$, where
$W=U \cup V \cup \{w\}$; $U=\{u_1,u_2,\ldots,u_m\}$
and $V=\{v_1,v_2,\ldots,v_n\} \cup \{v_1^T,v_2^T,\ldots,v_n^T\}
\cup \{v_1^F,v_2^F,\ldots,v_n^F\}$.
Intuitively, each vertex in $U$ represents a clause and vertices $v_i^T,v_i^F$ correspond to an assignment of $T,F$ respectively to $x_i$.

The edge set is $E=E_1 \cup E_2$,
where 
\begin{align*}
E_1=
&\{(u_i,v_j^T)|x_j \in C_i,1 \leq i \leq m,1 \leq j \leq n\} \\
&\cup \\
&\{(u_i,v_j^F)|\neg{x_j} \in C_i,1 \leq i \leq m, 1 \leq j \leq n\}
\end{align*}
and 
\begin{align*}
E_2=
\{(v_i^T,v_i)|1 \leq i \leq n\}
\quad
\cup
\quad
\{(v_i^F,v_i)|1 \leq i \leq n\}
\quad
\cup
\quad
\{(v_i,w)|1 \leq i \leq n\}.
\end{align*}

\begin{figure}
\centering
\begin{tikzpicture}[scale=.8,auto=center,every node/.style={circle,fill=black,inner sep=2pt},every path/.style={
        very thick}]
  \node (w) at (8,7){};
  \node (v1) at (3,6){};
  \node (v2) at (8,6){};
  \node (v3) at (13,6){};

  \draw[->] (v1)--(w);
  \draw[->] (v2)--(w);
  \draw[->] (v3)--(w);
 
 \node [fill=none] at (8,7.3) {$w$};
 \node [fill=none] at (3.6,5.8) {$v_1$};
 \node [fill=none] at (8.6,5.8) {$v_2$};
 \node [fill=none] at (13.6,5.8) {$v_3$};
 
  \node (v1T) at (2,5){};
  \node (v1F) at (4,5){};
  \node (v2T) at (7,5){};
  \node (v2F) at (9,5){};
  \node (v3T) at (12,5){};
  \node (v3F) at (14,5){};
  
 \node [fill=none] at (1.5,5.3) {$v_1^T$};
 \node [fill=none] at (4.5,5.3) {$v_1^F$};
 \node [fill=none] at (6.5,5.3) {$v_2^T$};
 \node [fill=none] at (9.5,5.3) {$v_2^F$};
 \node [fill=none] at (11.5,5.3) {$v_3^T$};
 \node [fill=none] at (14.5,5.3) {$v_3^F$};
  
  \draw[->] (v1T)--(v1);
  \draw[->] (v2T)--(v2);
  \draw[->] (v3T)--(v3);
  \draw[->] (v1F)--(v1);
  \draw[->] (v2F)--(v2);
  \draw[->] (v3F)--(v3);
 
  \node (u1) at (2,3){};
  \node (u2) at (6,3){};
  \node (u3) at (10,3){};
  \node (u4) at (14,3){};
 
 \node [fill=none] at (2,2.5) {$u_1$};
 \node [fill=none] at (6,2.5) {$u_2$};
 \node [fill=none] at (10,2.5) {$u_3$};
 \node [fill=none] at (14,2.5) {$u_4$};
 
  \draw[->,blue] (u1)--(v1T);
  \draw[->] (u1)--(v2F);
  \draw[->] (u1)--(v3T);
  \draw[->] (u2)--(v1T);
  \draw[->,blue] (u2)--(v2T);
  \draw[->] (u3)--(v1F);
  \draw[->,blue] (u3)--(v2T);
  \draw[->] (u3)--(v3F);
  \draw[->] (u4)--(v1F);
  \draw[->,blue] (u4)--(v2F);
\end{tikzpicture}
\caption{The DAG corresponding to the set of clauses 
$C_1=\{x_1,\neg{x_2},x_3\},
C_2=\{x_2,x_3\},
C_3=\{\neg{x_1},x_2,\neg{x_3}\},
C_4=\{\neg{x_2},\neg{x_3}\}$. The blue edges indicate the captures in Phase 1 for the satisfying assignment $x_1=T$, $x_2=T$, $x_3=F$.}

\end{figure}
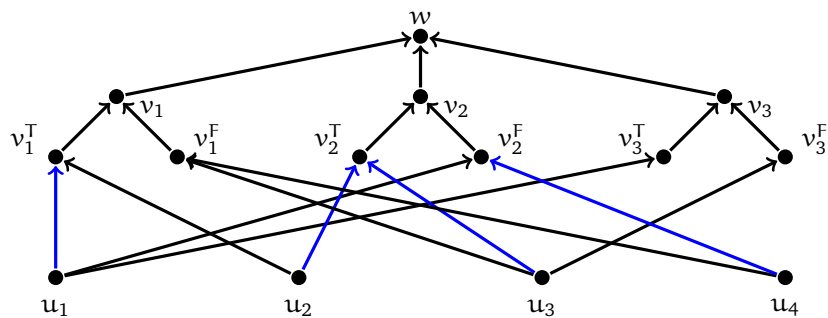

The graph $G$ can clearly be computed in time polynomial in the input size (number of variables and clauses).

It thus suffices to show that $\varphi$ is satisfiable iff $G$ is solvable.

First, suppose that $\varphi$ is satisfiable and let $A$ be a satisfying assignment for $\varphi$.
Consider the following sequence of captures:
\begin{itemize}
\item[Phase 1:] For each $i \in \{1,2,\ldots,m\}$, let $j$ be the least index such that $C_i$ is satisfied by $x_j$ or $\neg{x_j}$ in $A$.
Then the token at $u_i$ captures the token at $v_j^T$ (if $x_j=T$) or the token at $v_j^F$ (if $x_j=F$).
\item[Phase 2:] Now, for each $i \in \{1,2,\ldots,n\}$:
if $x_i=T$, then the token at $v_i^T$ captures the token at $v_i$, and then the token at $v_i^F$ captures the token at $v_i$;
otherwise $v_i=F$ and the token at $v_i^F$ captures the token at $x$, and then the token at $v_i^T$ captures the token at $v_i$;
\item[Phase 3:]
For each $i \in \{1,2,\ldots,n\}$,
the token at $v_i$ captures the token at $w$.
\end{itemize}
We show the validity of this capture sequence by considering each phase.

At the end of Phase 1, there are no tokens remaining at any of the $u_i$s, and further for each $i\in \{1,2,\ldots,n\}$, exactly one of the tokens among $\{v_i^T,v_i^F\}$ has 1 move remaining, and the other token has 2 moves remaining.

In Phase 2, the token among $\{v_i^T,v_i^F\}$ with 2 moves remaining is the last to capture at $v_i$; thus at the end of Phase 2, there is one token at each $v_i$ with one move remaining and further one more token at $w$; there are no other tokens.

In Phase 3, it is thus feasible for each token at $v_i$ to successively capture at $w$ and finally there is exactly one token remaining - at the vertex $w$.

Now, we suppose that $G$ is solvable; let $\sigma$ be a valid sequence of captures.
We claim that for each $i \in \{1,2,\ldots,n\}$, captures were not made at both $v_i^T$ and $v_i^F$ in $\sigma$.
For contradiction, suppose that captures were made both at $v_i^T$ and at $v_i^F$; let the last of these captures be at move $t_1$ of $\sigma$. 
Since all the tokens in $\{v_i^T,v_i^F\}$ must make their last capture at $v_i$, there must be a capture from $\{v_i^T,v_i^F\}$ to $z$ that appears in $\sigma$ later than $t$;
let the last such capture happen at move $t_2>t_1$. After move $t_2$, there are no tokens in $\{v_i^T,v_i^F\}$ and the token at $v_i$ has zero moves remaining; this implies that the token at $v_i$ cannot be cleared, which is the desired contradiction.

Now, let $I$ be the set of indices $i$ such that a capture was made at $v_i^T$.
Consider the assignment: for each $i \in \{1,2,\ldots,n\}$, we set $x_i=T$ if $i \in I$ and $x_i=F$ otherwise.
We claim that every clause is satisfied by this assignment.
Let $C_i$ be an arbitrary clause; if the token at $u_i$ made a capture at some $v_j^T$, then $C_i$ contains the literal $x_j$, and $j \in I$ so that $x_j=T$ and $C_i$ is satisfied.
If the token at $u_i$ made a capture at some $v_j^F$, then $C_i$ contains the literal $\neg{x_j}$. Also, by the claim in the previous paragraph, no capture was made at $v_j^T$, therefore $j \notin I$ and $x_j=F$, so that $C_i$ is satisfied.
\end{proof}

\section{Concluding Remarks}

We introduced~\textsc{Generalized Solo Chess} based on the Solo Chess game that is played on a $8 \times 8$ board. We focused mostly on scenarios that involve only pieces of one type, and showed that determining if a given instance is solvable is intractable when playing with rooks, bishops, and queens; while it is tractable for pawns. While we leave the case of knights open, we do show that a natural generalization of \textsc{Generalized Solo Chess} restricted to knights, \textsc{Graph Capture}, is hard even on DAGs and general undirected graphs. We also show that solvable instances of~\textsc{Generalized Solo Chess} played by rooks only admits an efficient characterization when the game is restricted to one-dimensional boards. 

Our work leaves open a few concrete open problems, which we enlist below:

\begin{enumerate}
    \item What is the complexity of \textsc{Generalized Solo Chess} played by rooks only, for the special case when all rooks are allowed to capture at most twice initially? Notice that if we replace rooks by queens in this question, we show \textsf{NP}-completeness (\Cref{thm:queens}).
    \item What is the complexity of \textsc{Generalized Solo Chess} played by pawns only, when the pawns are allowed at most a designated number of captures? Recall that if all pawns can capture at most twice, we have an efficient characterization (\Cref{thm-pawn-win}). 
    \item  What is the complexity of \textsc{Generalized Solo Chess} played by knights only?
\end{enumerate}
There are also several broad directions for future work, and we suggest some that we think are both natural and interesting problems to consider:

\begin{enumerate}
    \item For \textsc{NO}-instances of \textsc{Generalized Solo Chess}, a natural optimization objective is to play as many moves as possible, or, equivalently, leave as few pieces as possible on the board. It would also be interesting to find the smallest $d$ for which a board can be cleared if every piece was allowed to capture at most $d$ times. These are natural optimization versions that we did not explicitly consider but we believe would be interesting to explore. 
    \item Does~\textsc{Generalized Solo Chess} become easier if the number of pieces in every row or column is bounded? Note that the reduction in~\Cref{thm:rooks} can be used to show that \textsc{Generalized Solo Chess}($\symrook,2)$ is \textsf{NP}-complete even when the number of rooks per column is a constant, if we initiate the reduction from an instance of \textsf{Red-Blue Dominating Set} where every red vertex has constant degree.
    \item What is the complexity of~\textsc{Generalized Solo Chess} when played on boards of dimension $M \times N$, where one of $M$ or $N$ is a constant? It is not hard to generalize~\Cref{thm:rooks1d} to $2 \times N$ boards, however, a general result  --- say parameterized by one of the dimensions --- remains open.
    \item Variants of~\textsc{Generalized Solo Chess} where the pieces are limited not by the number of captures but the total distance moved on the board, or the distance moved per step, are also interesting to consider. Note that if the distance moved per step is lower bounded, then this forbids ``nearby captures'', while if it is upper bounded, then ``faraway captures'' are disallowed. 
    \item It would also be interesting to restrict the number of capturing pieces instead of the number of captures per piece. For example, it seems intuitive to posit that if we are permitted only one capturing piece, then it must trace a ``Hamiltonian path'' of sorts among the pieces on the board.
    \item Finally, we did not explicitly study \textsc{Generalized Solo Chess} with multiple piece types on the board --- although most such variants would be hard given the complexity results established already for pieces of one kind, it would be interesting to investigate special cases and exact algorithms in this setting.
\end{enumerate}



\bibliography{refs}

\newpage 

\appendix

\section{Rooks on 1D boards with $d=2$}


\begin{lemma}\label{lem:Rooks1D}
For every configuration $s$ of length $N$:
\begin{enumerate}
    \item $s$ is $N$-solvable iff $s[N]\neq\square$ and $s[1,\ldots,N-1]$ has at least as many $2$'s as $0$'s.
    \item $s$ is $1$-solvable iff $s[1]\neq\square$ and $s[2,\ldots,N]$ has at least as many $2$'s as $0$'s.
\end{enumerate}
\end{lemma}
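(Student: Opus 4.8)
The plan is to prove the first claim and to obtain the second from the left--right symmetry of the board. Throughout I would work with the integer potential
\[
\Phi \;=\; \big(\text{number of } 2\text{'s in } s[1,\ldots,N-1]\big) \;-\; \big(\text{number of } 0\text{'s in } s[1,\ldots,N-1]\big),
\]
evaluated on the current board. Observe that when $d=2$ the quantity $\sum_i(s[i]-1)$ from~\Cref{lem:rooks1Dgen} is exactly the number of $2$'s, so the statement here is precisely the $d=2$ instance of that lemma; the purpose of this appendix is a short self-contained argument, and $\Phi$ is the right bookkeeping device for it.

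For \textbf{necessity} I would first record a structural fact about any $N$-solvable play: the cell $(1,N)$ is occupied at every step and is never vacated. In one dimension a piece can only land on an occupied cell, so once $(1,N)$ became empty it could never be reoccupied, contradicting that the final piece sits there. In particular $s[N]\neq\square$, and moreover the piece currently at $(1,N)$ never \emph{initiates} a capture (that would vacate $(1,N)$); hence every capturing piece lies in $s[1,\ldots,N-1]$. With this in hand I would show that every legal move leaves $\Phi$ non-increasing, by a short case analysis on the value $v\in\{1,2\}$ of the capturing piece and on whether the captured piece lies in $s[1,\ldots,N-1]$ or at $(1,N)$. The only borderline cases are a $2$-rook eating a $0$-rook and a $1$-rook eating a $0$-rook, both of which leave $\Phi$ unchanged, while every other move strictly decreases it. Since the board ends with a single piece at $(1,N)$, the final value of $\Phi$ is $0$, and monotonicity forces the initial value to satisfy $\Phi\geq 0$, which is exactly ``at least as many $2$'s as $0$'s in $s[1,\ldots,N-1]$''.

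For \textbf{sufficiency} I would argue by induction on the number $z$ of $0$'s in $s[1,\ldots,N-1]$, assuming $s[N]\neq\square$ and $\Phi\geq 0$. In the base case $z=0$ every piece in $s[1,\ldots,N-1]$ has at least one move, so clearing right-to-left---repeatedly letting the occupied cell nearest to $(1,N)$ on its left capture onto $(1,N)$---empties the board with the final piece at $(1,N)$. For the inductive step ($z\geq 1$, so $\Phi\geq 0$ guarantees a $2$-rook) I would first perform all available \emph{cleanup} captures in which a $1$-rook takes an adjacent $0$-rook; each such move lowers the piece count and does not increase $z$, so the process terminates. Once no such move remains, I claim some $2$-rook and some $0$-rook are consecutive among the occupied cells (possibly separated only by empty cells), so the $2$-rook may capture the $0$-rook; this drops the $2$-count and the $0$-count each by one, preserves $\Phi\geq 0$ and $s[N]\neq\square$, and lowers $z$, so the induction hypothesis finishes the job.

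The step I expect to be the main obstacle is precisely this adjacency claim in the sufficiency direction, where the one-dimensional ``no jumping over pieces'' rule bites: unlike the clean counting argument for necessity, I must verify that after exhausting the $1$-eats-$0$ cleanup the residual configuration genuinely exposes a $0$-rook to a capturing $2$-rook. The key observation I would spell out is that once no $1$-rook is adjacent to a $0$-rook, the occupied cells flanking any maximal block of $0$'s (read through empty cells) are forced to be $2$-rooks, so examining the left- or right-most $0$-rook yields a local pattern $0\,\square^{\lambda}\,2$ or $2\,\square^{\lambda}\,0$ and hence a legal capture; I would then check that executing it disturbs neither cell $(1,N)$ nor the invariant, mirroring the analogous step in the proof of~\Cref{lem:rooks1Dgen}.
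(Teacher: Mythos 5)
Your proof is correct. The sufficiency half is essentially the paper's own argument (induction on the number of $0$'s, a cleanup phase of $1$-eats-$0$ moves, then locating a $0\,\square^{\lambda}\,2$ or $2\,\square^{\lambda}\,0$ pattern and recursing), and your handling of the adjacency claim via maximal blocks of $0$'s is if anything more explicit than what the paper writes down. The necessity half is where you genuinely diverge: the paper argues by an implicit injection --- each $0$-rook in $s[1,\ldots,N-1]$ must at some point be revived by a $2$-rook, that $2$-rook sits at its initial square with initial label $2$ in a column left of $N$, and it can serve only one $0$-rook since it becomes a $1$-rook afterwards --- whereas you run a monovariant argument showing $\Phi=\#2-\#0$ over the current cells $1,\ldots,N-1$ never increases and ends at $0$. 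Both rest on the same structural fact (the piece at $(1,N)$ never initiates a capture, since a vacated cell in 1D can never be reoccupied), and both are sound; the potential argument is more mechanical and checks all move types uniformly, while the paper's injection is shorter but leaves the injectivity implicit. One small slip in your case analysis: there are three $\Phi$-preserving moves, not two --- besides a $2$- or $1$-rook eating a $0$-rook within $s[1,\ldots,N-1]$, a $1$-rook capturing the piece at $(1,N)$ also leaves $\Phi$ unchanged --- but since you only use that $\Phi$ is non-increasing, this does not affect the proof. Your reduction of the second claim to the first by left--right symmetry matches the paper's reversal argument.
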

\begin{proof}
\begin{enumerate}
    \item We first argue the forward implication. Let $s$ be a configuration of length $N$ such that $s$ is $N$-solvable. That is, there exists a sequence of moves (say $\sigma$) that clears the corresponding $1\times N$ board with the final rook at the cell $(1,N)$. Clearly, $s[N]\neq\square$. Let $1\leq i<N$ such that $s[i]=0$. That is, the cell $(1,i)$ has a $0$-rook before the first move of $\sigma$. Let $t$ denote the first move of $\sigma$ that empties the cell $(1,i)$. Note that the cell $(1,i)$ has a $1$-rook before the $t^{th}$ move of $\sigma$. So, there exists $1\leq q<t$ such that a $2$-rook captures $0$-rook at the cell $(1,i)$ in the $q^{th}$ move of $\sigma$. Also, before the $q^{th}$ move of $\sigma$, such a $2$-rook is located at $(1,j)$ for some $1\leq j<N$ such that $s[j]=2$. Thus, $s[1,\ldots,N-1]$ has at least as many $2$'s as $0$'s.

For the converse, we show (using induction on $m$) that the following statement is true for all integers $m\geq 0$: For every configuration $s$ such that $s[N]\neq\square$, if $s[1,\ldots,N-1]$ has exactly $m$ $0's$ and has at least as many $2's$ as $0$'s, then $s$ is $N$-solvable. 

For the base case, consider $m=0$. Let $s$ be a configuration such that $s[N]\neq\square$ and $s[1,\ldots,N-1]$ has no $0$'s. For each $1\leq i< N$, the cell $(1,i)$ in the corresponding board is either empty or has a $1/2$-rook. The board can be cleared with the final piece at $(1,N)$ by making the $1/2$-rooks (if any) to capture rook at the cell $(1,N)$. Thus, $s$ is $N$-solvable. So, the statement is true for $m=0$.\\\\
As induction hypothesis, assume that the statement is true for all integers $0\leq m\leq p$, for some integer $p\geq 0$. Let's argue that the statement is true for $m=p+1$. Let $s$ be a configuration such that $s[N]\neq\square$, and $s[1,\ldots,N-1]$ has exactly $(p+1)$ $0's$ and at least as many $2$'s as $0$'s. While there is a $1$-rook in the cells $(1,1),\ldots,(1,N-1)$ that can capture a $0$-rook in the cells $(1,1),\ldots,(1,N-1)$, make such a move. Once no such move can be made, there exist integers $1\leq u<v< N$ such that $s[u,\ldots,v]= 0\square^{\lambda}2$ or $s[u,\ldots,v]=2\square^{\lambda}0$, for some integer $\lambda\geq 0$. In the former (resp. latter) case, the $2$-rook at the cell $(1,v)$ (resp. $(1,u)$) can be made to capture the $0$-rook at the cell $(1,u)$ (resp. $(1,v)$), and the configuration corresponding to the resulting board is $N$-solvable by induction hypothesis.
\item Let $s^{rev}$ denote the string obtained by reversing $s$. Note that $s$ is $1$-solvable iff $s^{rev}$ is $N$-solvable. Also, using $1.$, $s^{rev}$ is $N$-solvable iff $s^{rev}[N]\neq\square$ and $s^{rev}[1,\ldots N-1]$ has at least as many $2$'s as $0$'s.
\end{enumerate}
This concludes the proof. 
\end{proof}
Thus, when $d = 2$, a configuration $s$ of length $N$ is solvable iff there exists $1\leq\ell\leq N$ such that
\begin{itemize}
    \item $s[\ell]\neq\square$,
    \item $s[1,\ldots,\ell-1]$ has at least as many $2$'s as $0$'s, and
    \item $s[\ell+1,\ldots,N]$ has at least as many $2$'s as $0$'s.
\end{itemize}

\newpage 

\section{Proof of Lemma \ref{spanningTreeLemma}}

We observe that for a graph $G$ which is a Yes Instance, the set of edges along which captures are made must induce a connected subgraph of $G$, and further the number of such edges is equal to $|V|-1$, and therefore these edges induce a spanning tree.

For $i \in \{0,1\}$, we say that a rooted tree $(T,v)$ is $i$-solvable if there is a sequence of captures such that (a) the final is made at $v$; and (b) the final capture piece has at least $i$ moves left after the final capture at $v$. We remark that this notation as used for graphs is different from that used for configurations of rooks.

Let $T_d$ denote the set of trees which are Yes Instances of GraphCapture(G,d).
Then an arbitrary graph $G$ is a Yes Instance of GraphCapture(G,d) if and only if $G$ contains a spanning tree in $T_d$. The class $T_1$ consists precisely of the star graphs.

\begin{proof}
We shall prove Lemma \ref{spanningTreeLemma} using the three claims that follow this proof.

From our earlier observation, it suffices to show the following:
If $T$ is a tree, then $T \in T_2$ if and only if $T$ has a vertex $v$ such that every internal node of $(T,v)$ has a leaf neighbor.

First, let $T \in T_2$. 
Let $v$ be the vertex where the final capture is made in a valid capture sequence.
Then $(T,v)$ is 0-solvable, and by Claim \ref{solvableClaim} every internal node of $(T,v)$ has a leaf neighbor.

For the other direction, note that by Claim \ref{solvableClaim}, we have that $(T,v)$ is 0-solvable and in particular, $T \in T_2$.
\end{proof}

\begin{claim}\label{zeroSolvableClaim}
Let $(T,v)$ be a rooted tree.
Then $(T,v)$ is 0-solvable
$\iff$
$\forall w \in C(v)$, the subtree $(T(w),w)$ is 1-solvable.
\end{claim}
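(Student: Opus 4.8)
The plan is to prove the two directions separately; the reverse implication is a direct construction, while the forward implication needs a normalization argument on the clearing sequence.

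\textbf{Reverse direction ($\Leftarrow$).} Suppose every child subtree $(T(w),w)$ with $w \in C(v)$ is $1$-solvable. Since the subtrees $\{T(w) : w \in C(v)\}$ are pairwise vertex-disjoint and disjoint from $v$, I would first run, independently, a $1$-solving sequence inside each $T(w)$. By definition of $1$-solvability this leaves, for each $w \in C(v)$, a single token on $w$ with at least one move remaining, while the original token on $v$ is untouched. I would then process the children in an arbitrary order $w_1, w_2, \dots$: the token on $w_1$ captures the token on $v$, the token on $w_2$ captures the newly arrived token on $v$, and so on. Each such capture costs a single move, which every child token can afford, and the final capture is made at $v$. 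After this phase only $v$ is occupied, so $(T,v)$ is $0$-solvable.

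\textbf{Forward direction ($\Rightarrow$).} Fix a clearing sequence $\sigma$ for $(T,v)$ whose final capture is at $v$, and fix a child $w \in C(v)$. The key structural fact is that the edge $(v,w)$ is a bridge, so $T(w)$ communicates with the rest of the board only through $w$. I would track the number of tokens currently inside $T(w)$: an internal capture or a crossing $w \to v$ decreases it by one, while a crossing $v \to w$ leaves it unchanged. Since this count starts at $|V(T(w))|$ and must end at $0$, and since a count of $1$ can only drop to $0$ by a $w \to v$ crossing (an internal capture needs two tokens in $T(w)$, and a $v \to w$ crossing keeps the count at $1$), the last crossing of $(v,w)$ is a move $p$ of type $w \to v$; moreover no move after $p$ touches $T(w)$ at all. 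Just before $p$ there is exactly one token in $T(w)$, sitting on $w$, and at $p$ it captures at $v$, so it has at least one move remaining at that instant.

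\textbf{The main obstacle and how I would finish.} If $\sigma$ contained no crossing of type $v \to w$ into $T(w)$, then the subsequence of $\sigma$ consisting of its moves internal to $T(w)$ would by itself be a legal play that clears $T(w)$ and leaves a single token on $w$ with at least one move left --- exactly a witness that $(T(w),w)$ is $1$-solvable --- and we would be done. The difficulty is that $\sigma$ may \emph{import} a token from $v$ into $T(w)$ (a $v \to w$ crossing), after which the restriction of $\sigma$ to $T(w)$ is no longer self-contained, since it uses a token that the subtree instance does not possess. I expect this to be the crux of the argument. The plan is to establish a normalization (exchange) lemma: any clearing sequence for $(T,v)$ can be rewritten into one with no imports into $T(w)$, without increasing the number of imports into the other child subtrees. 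Concretely, I would take the last import into $T(w)$, follow the imported token until it is either re-exported or captured inside $T(w)$, and reroute the surrounding captures so that the displaced native token plays its role instead; inducting on the total number of imports then yields an import-free clearing. Once imports are removed, the restriction of $\sigma$ to $T(w)$ is a valid $1$-solving sequence by the count argument above, completing the forward direction.
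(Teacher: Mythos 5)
Your reverse direction is correct, and so is the counting skeleton of your forward direction (the edge $(v,w)$ is a bridge, the token count in $T(w)$ drops only via internal captures and $w \to v$ exports, and the last event touching $T(w)$ is an export made by a token that still has a capture to spare). The genuine gap is that the forward direction ultimately rests on your unproven ``normalization lemma,'' and the surgery you sketch does not obviously work: if you suppress the import $v \to w$ in which $\alpha$ captures $\beta$ and ask the surviving native token $\beta$ to play the subsequent role of $\alpha$, then (i) $\beta$ may have fewer captures remaining than $\alpha$ had upon arriving at $w$, so it cannot necessarily perform $\alpha$'s later captures, and (ii) deleting one capture leaves the board with one extra token at the end, so the rewritten sequence no longer clears the board. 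There is also a smaller unaddressed point: to conclude that the internal captures \emph{by themselves} reduce $T(w)$ to a single token, you need that there is exactly one export, not merely that the last crossing is an export. (For calibration: the paper states this claim without proof and only proves the claim that consumes it, so the burden of supplying this argument is real.)

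The good news is that the obstacle you identify as ``the crux'' does not exist, and seeing why collapses the proof to your easy case. In this game a token moves only by capturing, so a vertex can only be entered by a token if it already holds a token to be captured; hence once a vertex is vacated it stays empty forever. Since the final capture of $\sigma$ is made at $v$, the vertex $v$ is occupied at the end of the game, so it is never vacated, so no token sitting at $v$ ever initiates a capture --- in particular no import $v \to w$ ever occurs, for any child $w$. The same observation applied to $w$ gives exactly one export $w \to v$ (an earlier export would vacate $w$ permanently, contradicting the final export being launched from $w$), which also forces the last internal capture of $T(w)$ to land on $w$. With these two remarks, the restriction of $\sigma$ to the internal captures of $T(w)$ is verbatim a witness that $(T(w),w)$ is $1$-solvable, and your ``no-import'' paragraph is the entire forward direction.
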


\begin{claim}\label{oneSolvableClaim}
Let $(T,v)$ be a rooted tree with at least two vertices.
Then $(T,v)$ is 1-solvable
$\iff$ $v$ has a leaf neighbor and 
$\forall w \in C(v)$, the subtree $(T(w),w)$ is 1-solvable.
\end{claim}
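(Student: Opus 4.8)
The plan is to prove both implications by leaning on Claim~\ref{zeroSolvableClaim}, which already characterises $0$-solvability of a rooted tree in terms of the $1$-solvability of its child subtrees. The key structural observation I would use throughout is that $1$-solvability is strictly stronger than $0$-solvability: any sequence witnessing that $(T,v)$ is $1$-solvable is, a fortiori, a witness that $(T,v)$ is $0$-solvable. Moreover, in a $1$-solving sequence the token that performs the final capture at $v$ has at least one move to spare afterwards, so --- since that final capture already uses one of its two allotted moves --- it must have made \emph{no} capture before the final move, and hence (a token moves only by capturing) it never moved and was never itself captured.

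For the forward direction, the condition that every child subtree $(T(w),w)$ be $1$-solvable follows immediately: $(T,v)$ is $1$-solvable, hence $0$-solvable, so Claim~\ref{zeroSolvableClaim} applies verbatim. The substantive part is showing that $v$ has a leaf neighbour. Here I would let $x$ be the vertex holding the final capturing token just before the last move; by the observation above this is the original, untouched token of $x$, so no capture ever uses the edge $\{x,v\}$ (or touches $x$ at all) before the final move. This \emph{seals} the subtree $T(x)$: no token enters or leaves it until the very end. But just before the last move only $x$ and $v$ are occupied, so every token originally on $V(T(x))\setminus\{x\}$ must have been cleared using captures internal to $T(x)$ that avoid $x$. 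Such captures act independently within each subtree rooted at a child of $x$, and a nonempty component can never be emptied completely by captures. Hence $x$ has no children, i.e.\ $x$ is a leaf neighbour of $v$.

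For the reverse direction, suppose $v$ has a leaf neighbour $x$ and every $(T(w),w)$ with $w\in C(v)$ is $1$-solvable. I would first clear $T-x$ (rooted at $v$): its child subtrees are exactly the $(T(w),w)$ with $w\in C(v)\setminus\{x\}$, all $1$-solvable, so by Claim~\ref{zeroSolvableClaim} the tree $T-x$ is $0$-solvable, and I can reduce it to a single token at $v$ using only captures disjoint from $x$. The leaf token at $x$, being untouched, still has both of its moves; letting it make the final capture at $v$ leaves it with one move to spare, witnessing that $(T,v)$ is $1$-solvable. The degenerate case $C(v)=\{x\}$ (so $T=\{v,x\}$) is handled identically, with the first phase vacuous.

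The main obstacle I anticipate is the necessity of the leaf neighbour in the forward direction; everything else is bookkeeping on top of Claim~\ref{zeroSolvableClaim}. The delicate point is making the ``sealing'' argument precise --- formally tracking that an inert, never-captured token at $x$ isolates $T(x)$ from the rest of the board for the entire game before the final move, and that an isolated nonempty forest cannot be reduced to zero tokens by captures.
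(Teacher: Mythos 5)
Your proof is correct. A caveat on the comparison: the paper states Claim~\ref{oneSolvableClaim} (and Claim~\ref{zeroSolvableClaim}) without supplying a proof, so there is no in-paper argument to match yours against; the closest relative is the proof of part (b)$\Leftarrow$ of Claim~\ref{solvableClaim}, where the paper takes an arbitrary $0$-solving sequence and \emph{reorders} it so that the capture from the leaf $w$ to $v$ comes last. Your reverse direction reaches the same conclusion by a slightly different route --- you build the sequence from scratch by applying Claim~\ref{zeroSolvableClaim} to $T-x$ and then appending the capture $x\to v$ --- which is equally valid and arguably cleaner, since it avoids having to justify that the reordering preserves validity. The substantive contribution is your forward direction: the observation that with $d=2$ the final capturer must be an inert, never-moved, never-captured token at some child $x$ of $v$, so that the edge $\{x,v\}$ is the unique bridge out of $T(x)$ and is blocked until the last move, sealing $T(x)-x$ into components that can never be emptied (a capture cannot take a component from one token to zero). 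This correctly forces $C(x)=\emptyset$. The argument is sound and fills a gap the paper leaves open; the only point worth making fully explicit in a final write-up is that the $i$-solvability definition presupposes a sequence that clears the whole tree to a single token, which is what licenses the step ``just before the last move only $x$ and $v$ are occupied.''
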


\begin{claim}\label{solvableClaim}
Let $(T,v)$ be a rooted tree.
Then we have:

(a) $(T,v)$ is 0-solvable
$\iff$ every internal vertex of $(T,v)$ has a leaf neighbor;

(b) $(T,v)$ is 1-solvable
$\iff$ $|T|=1$ OR $(T,v)$ is 0-solvable and $v$ has a leaf neighbor.
\end{claim}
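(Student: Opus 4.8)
The plan is to deduce Claim~\ref{solvableClaim} from the two recursive characterizations in Claim~\ref{zeroSolvableClaim} and Claim~\ref{oneSolvableClaim}, establishing part (b) directly and part (a) by induction on $|T|$. Throughout I will use the elementary observation that for any node $x$ of a rooted tree, a leaf neighbor of $x$ must be a child of $x$: the parent of $x$ has $x$ as a child and hence is itself internal, so it can never be a leaf. Thus ``$x$ has a leaf neighbor'' is synonymous with ``$x$ has a child that is a leaf'', and I will read ``internal node'' in part (a) as a non-root node that has at least one child, so that the root $v$ is exempt from the leaf-neighbor requirement (consistent with Claim~\ref{zeroSolvableClaim} imposing no condition on $v$ itself).

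Part (b) needs no induction. When $|T|=1$ the right-hand side holds because $|T|=1$, and a single-vertex tree is $1$-solvable by convention, so both sides are true. When $|T|\geq 2$, Claim~\ref{oneSolvableClaim} gives that $(T,v)$ is $1$-solvable iff $v$ has a leaf neighbor and every child subtree $(T(w),w)$ is $1$-solvable, while Claim~\ref{zeroSolvableClaim} identifies the conjunct ``every child subtree is $1$-solvable'' with ``$(T,v)$ is $0$-solvable''. Hence $(T,v)$ is $1$-solvable iff $(T,v)$ is $0$-solvable and $v$ has a leaf neighbor, which is exactly the right-hand side of (b) since here $|T|\neq 1$.

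For part (a) I induct on $|T|$. The base case $|T|=1$ is immediate, as the configuration is trivially $0$-solvable and has no non-root internal node, so both sides are vacuously true. For $|T|\geq 2$, Claim~\ref{zeroSolvableClaim} reduces $0$-solvability of $(T,v)$ to $1$-solvability of every child subtree $(T(w),w)$, each of which is strictly smaller than $T$. Applying part (b) and then the induction hypothesis for part (a) to each $(T(w),w)$, I obtain that $(T(w),w)$ is $1$-solvable iff either $|T(w)|=1$, or $w$ has a leaf child and every non-root internal node of $(T(w),w)$ has a leaf child.

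The crux is to repackage this per-child condition as a statement about $(T,v)$. The key point is that whenever $|T(w)|\geq 2$, the root $w$ of $T(w)$ is exactly a non-root internal node of $(T,v)$, so the extra clause ``$w$ has a leaf child'' arising from $1$-solvability is precisely the leaf-neighbor requirement that part (a) imposes on $w$. Hence, in either case, ``$(T(w),w)$ is $1$-solvable'' is equivalent to ``every node of $T(w)$ that has a child has a leaf child''. Since $T \setminus \{v\}$ is the disjoint union of the subtrees $T(w)$ over $w \in C(v)$, and every node $x \neq v$ has the same children in $T$ as in its containing subtree, the conjunction over all children collapses to ``every non-root node of $(T,v)$ that has a child has a leaf child'', which is exactly the right-hand side of (a). I expect the only delicate point to be this bookkeeping at the child roots $w$, together with the exemption of $v$ in part (a); everything else is a direct rewriting of the two recursion claims.
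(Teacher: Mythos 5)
Your proof is correct and takes essentially the same route as the paper: an induction on $|T|$ powered by Claims \ref{zeroSolvableClaim} and \ref{oneSolvableClaim}. The only differences are cosmetic streamlinings --- you note that part (b) for $|T|\ge 2$ is an immediate formal consequence of the two recursion claims (so it needs neither the induction nor the paper's capture-reordering argument for the $\Leftarrow$ direction), and you make explicit the convention, left implicit in the paper, that ``internal node'' exempts the root $v$ (which is indeed forced for (a) to be true: a path $v\!-\!w\!-\!x$ rooted at $v$ is $0$-solvable although $v$ has no leaf neighbor).
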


\begin{proof}
We prove (a) and (b) together by induction on $|V(T)|$.

When $|V(T)|=1$ and $|V(T)=2$, the statements are easily verified to be true.

Now we proceed to the induction step.

\begin{enumerate}
\item
(a) $\Rightarrow$:
Suppose that $(T,v)$ is 0-solvable.
By Claim \ref{zeroSolvableClaim}, for each $w \in C(v)$, we have that $(T(w),w)$ is 1-solvable.
In particular, each $(T(w),w)$ is 0-solvable, and thus by the induction hypothesis (a), every internal node not in $C(w)$ has a leaf neighbor.
By Claim \ref{oneSolvableClaim}, we also know that every $w \in C(v)$ must also have a leaf neighbor (if $|T(w)| \geq 2)$ or itself be a leaf vertex (if $|T(w)|=1)$.
This shows that every internal vertex of $(T,v)$ has a leaf neighbor.

\item
(a) $\Leftarrow$:
Suppose that $T$ has a vertex $v$ such that every internal vertex of $(T,v)$ has a leaf neighbor.
By the induction hypothesis (a), we know that for each $w \in C(w)$ the subtree $T(w)$ is 0-solvable. Further, we have either $|T(w)|=1$ or $w$ (being an internal vertex) must have a leaf neighbor. Thus, by induction hypothesis (b), we obtain that $(T(w),w)$ is 1-solvable for every $w \in C(v)$ and hence by Claim \ref{zeroSolvableClaim}, we obtain that $(T,v)$ is 0-solvable.

\item
(b) $\Rightarrow$:
Suppose that $(T,v)$ is 1-solvable.
Then $(T,v)$ is 0-solvable and further, either $|T|=1$, or by Claim \ref{oneSolvableClaim}, we get that $v$ has a leaf neighbor.

\item
(b) $\Leftarrow$:
Suppose that $(T,v)$ is 0-solvable and that $v$ has a leaf neighbor $w$.
Consider a valid sequence $\sigma$ that results in the last capture being made at $v$. One of these captures must be from $w$ to $v$. Moving this capture to the end of $\sigma$ maintains the validity of the sequence since $w$ has no children, and results in the final token at $w$ having at least one move remaining.
\end{enumerate}
This concludes the proof of the claim.\end{proof}

\end{document}